\DeclareDocumentCommand \inferrule { s O {} m m o }{%
  \IfBooleanTF{#1}%
  {%
    \mpr@inferstar[#2]{#3}{#4}%
  }{%
    \mpr@inferrule[#2]{#3}{#4}%
  }%
  \IfValueT{#5}%
  {%
    \my@name@inferrule{#5}%
  }%
}
\NewDocumentCommand \my@name@inferrule { m }{%
  \def\@currentlabelname{\ensuremath{#1}}%
}
\newcommand{\interp}[1]{\llbracket #1 \rrbracket}
\DeclareMathOperator{\isProp}{isProp}
\DeclareMathOperator{\isSet}{isSet}
\newcommand{\Bool}{\mathrm{Bool}}
\newcommand{\true}{\mathrm{true}}
\newcommand{\false}{\mathrm{false}}
\newcommand{\DCPO}{\mathrm{DCPO}}
\newcommand{\isDirected}[1]{\operatorname{isDirected}(#1)}
\DeclareMathOperator{\isUpperbound}{isUpperbound}
\DeclareMathOperator{\isLeastUpperbound}{isLeastUpperbound}
\newcommand{\dcpoSup}[1]{\textstyle\bigsqcup_{#1}}
\newcommand{\RoundedIdeal}[1]{\textrm{R-Idl}(#1)}
\newcommand{\covered}{\triangleleft}
\newcommand{\Ideal}[1]{\mathrm{Idl(#1)}}
\newcommand{\Monomial}{\mathrm{Monomial}}
\newcommand{\monomial}{M}
\newcommand{\monomialArity}[1]{{#1}_B}
\newcommand{\monomialConst}[1]{{#1}_C}
\newcommand{\monomialAction}[1]{\overline{#1}}
\newcommand{\Sig}{\mathrm{Sig}}
\newcommand{\PreSig}{\mathrm{PreSig}}
\newcommand{\sigConstrNames}[1]{{#1}_A}
\newcommand{\sigMonomialFam}[1]{{#1}_{\monomial}}
\newcommand{\sigMonomialArity}[1]{{#1}_B}
\newcommand{\sigMonomialConst}[1]{{#1}_C}
\newcommand{\sigPre}[1]{{#1}_{\mathrm{pre}}}
\newcommand{\sigIneqNames}[1]{{#1}_E}
\newcommand{\sigIneqFam}[1]{{#1}_{\mathrm{ineq}}}
\newcommand{\Term}{\mathrm{Term}}
\newcommand{\var}[1]{\operatorname{var}(#1)}
\newcommand{\constr}[3]{\operatorname{constr}_{#1}(#2, #3)}
\newcommand{\Ineq}{\mathrm{Ineq}}
\newcommand{\varAssignment}{\rho}
\newcommand{\formalIneqVar}[3]{{#1} \mathop{\widetilde{\sqsubseteq}_{#2}} {#3}}
\newcommand{\formalIneq}[2]{{#1} \mathop{\widetilde{\sqsubseteq}} {#2}}
\newcommand{\interpretation}[2]{\operatorname{op}_{#1,#2}}
\newcommand{\PreAlg}[1]{\mathrm{PreAlg}_{#1}}
\newcommand{\Alg}[1]{\mathrm{Alg}_{#1}}
\newcommand{\Initial}[1]{\mathrm{Initial}_{#1}}
\newcommand{\Leq}[1]{\sqsubseteq_{#1}}
\newcommand{\app}[1]{\operatorname{app}_{#1}}
\newcommand{\initialLub}[1]{\textstyle\bigsqcup_{#1}}
\newcommand{\rec}{\mathrm{rec}}
\newcommand{\ind}{\mathrm{ind}}
\newcommand{\sigOver}[2]{#1 + #2}
\newcommand{\inclName}{\tilde{\iota}}
\newcommand{\forgetIncl}[1]{{#1}^{-}}
\newcommand{\addIncl}[2]{{{#1} + {#2}}}
\newcommand{\Universe}{\mathcal{U}}
\newcommand{\botName}{\tilde{\bot}}
\newcommand{\botTerm}{\bot_t}
\newcommand{\pointedDcpoSig}{\Sigma_{\mathrm{DCPO}_\bot}}
\newcommand{\plotkinPower}[1]{\mathcal{P}(#1)}
\newcommand{\formalUnion}{\cup}
\newcommand{\powerIncl}[1]{\{{#1}\}}
\newcommand{\inclusionName}{\widetilde{\powerIncl{\blank}}}
\newcommand{\formalUnionName}{\widetilde{\formalUnion}}
\newcommand{\powerSig}{\Sigma_{\text{Power}}}
\newcommand{\csumSig}[2]{\Sigma_{{#1} + {#2}}}
\newcommand{\sumInclName}[1]{\tilde{\iota}_{#1}}
\newcommand{\sumIncl}[1]{\iota_{#1}}
\newcommand{\smashSig}[2]{\Sigma_{{#1} \wedge {#2}}}
\newcommand{\smashIncl}{\iota}
\newcommand{\coeqSig}[2]{\Sigma_{{#1} \rightrightarrows {#2}}}
\newcommand{\coeqIncl}{\iota}
\newcommand{\emptyList}{[]}
\newcommand{\cons}[2]{#1 :: #2}
\newcommand{\consSorted}[3]{#1 \mathop{{::}\langle#2\rangle} #3}
\newcommand{\finSet}{\mathrm{FinSet}}
\newcommand{\sortedList}{\mathrm{SortedList}}
\newcommand{\strictSortedList}{\mathrm{StrictSortedList}}
\newcommand{\Nat}{\mathbb{N}}
\newcommand{\leqList}{\leq_L}
\newcommand{\blank}{{-}}
\newcommand{\bang}{{!}}
\begin{document}
\begin{frontmatter}
  \title{Initial Algebras of Domains via Quotient Inductive-Inductive Types} 						
  \author{Simcha van Collem\thanksref{a}\thanksref{simcha-email}\thanksref{erc}}	
   \author{Niels van der Weide\thanksref{a}\thanksref{niels-email}\thanksref{nwo}}		
   \author{Herman Geuvers\thanksref{a}\thanksref{b}\thanksref{herman-email}}		
   \address[a]{Institute for Computing and Information Sciences\\ Radboud University\\				
    Nijmegen, The Netherlands}
   \address[b]{Faculty of Mathematics and Computer Science\\ Technical University Eindhoven\\				
    The Netherlands}
   \thanks[simcha-email]{Email: \href{mailto:simcha.vancollem@ru.nl} {\texttt{\normalshape
        simcha.vancollem@ru.nl}}}
  \thanks[niels-email]{Email:  \href{mailto:nweide@cs.ru.nl} {\texttt{\normalshape
        nweide@cs.ru.nl}}}
  \thanks[herman-email]{Email:  \href{mailto:herman@cs.ru.nl} {\texttt{\normalshape
        herman@cs.ru.nl}}}
  \thanks[erc]{The first author was supported by ERC grant COCONUT (grant agreement no. 101171349), funded by the European Union. Views and opinions expressed are however those of the author(s) only and do not necessarily reflect those of the European Union or the European Research Council Executive Agency. Neither the European Union nor the granting authority can be held responsible for them.}
  \thanks[nwo]{The second author was supported by the NWO project ``The Power of Equality'' OCENW.M20.380, which is financed by the Dutch Research Council (NWO).}
  \begin{abstract}
    Domain theory has been developed as a mathematical theory of
    computation and to give a denotational semantics to programming
    languages. It helps us to fix the meaning of language concepts, to
    understand how programs behave and to reason about programs.  At
    the same time it serves as a great theory to model various
    algebraic effects such as non-determinism, partial functions, side
    effects and numerous other forms of computation.

    In the present paper, we present a general framework to construct
    algebraic effects in domain theory, where our domains are DCPOs:
    directed complete partial orders. We first describe so called {\em
      DCPO algebras\/} for a signature, where the signature specifies
    the operations on the DCPO and the inequational theory they
    obey. This provides a method to represent various algebraic
    effects, like partiality. We then show that initial DCPO algebras
    exist by defining them as so called {\em Quotient
      Inductive-Inductive Types} (QIITs), known from homotopy type
    theory. A quotient inductive-inductive type allows one to
    simultaneously define an inductive type and an inductive relation
    on that type, together with equations on the type. We illustrate
    our approach by showing that several well-known constructions of
    DCPOs fit our framework: coalesced sums, smash products and free
    DCPOs (partiality and power domains).  Our work makes use of
    various features of homotopy type theory and is formalized in
    Cubical Agda.
%
%
  \end{abstract}
\begin{keyword}
  domain theory, quotient inductive-inductive types, algebra, algebraic effects.
\end{keyword}
\end{frontmatter}

\section{Introduction}

Domain theory was developed in the late 1960s by Dana
Scott~\cite{Scott70} as a mathematical theory of computation and to
give a {\em denotational semantics\/} to programs and programming
languages, one that abstracts away from the operational semantics of a
program, and describes the semantics of a program in terms of the
mathematical function it denotes.  Domain theory allows one to give a
high level meaning to programs and program constructions and to fix
the meaning of concepts from programming languages. It also allows to
reason about programs on a higher level of abstraction.
Since the 60s, programming languages have evolved, introducing many new programming paradigms.
To keep up with the rapid development of these new constructs in programming languages,
and to fully understand how they work,
we also need to be able to describe these using denotational semantics.

Algebraic effects~\cite{lmcs:705} have been introduced to represent computational effects in programming, such as
state, exceptions, nondeterminism, non-termination, input-output, and many more.
They allow factoring out effectful computations from pure computations and can be composed easily.
An effect is defined as a set of operations and an (in)equational theory these operations should obey.
For example, non-termination has two operations; one operation represents the non-termination,
while the other represents the possible returned value.

In domain theory these algebraic effects are a \emph{directed complete partial order} (DCPO) $D$ together with operations on $D$ which obey an inequational
theory,
and we call these DCPO algebras.
Of particular importance is the \emph{initial DCPO algebra}.
For example, for partiality, the inequational theory states
that the non-termination operation should be smaller than all possible
return values, as non-termination gives the least amount of
information about the return value of a partial function.
Another example is given by the \emph{powerdomain}~\cite{Plotkin76}.
Its inequational theory is similar to that of a join-semilattice,
and it is used to represent nondeterminism.

There are various constructions of initial DCPO algebras~\cite{JUNG2008209}.
However, these constructions use power sets,
and thus they are only suitable in impredicative foundations.
In this paper, we present an alternative construction for initial DCPO algebras.
The main idea is to use \emph{quotient inductive-inductive types} (QIITs)~\cite{DBLP:conf/fossacs/AltenkirchCDKF18}.
Using QIITs, we can define a DCPO by specifying operations and inequalities.
This idea was also used by Altenkirch, Danielsson, and Kraus~\cite{Partiality} to construct the partiality monad.
We use this mechanism to show that one can define initial algebras for a suitable notion of signature.
Note that this construction is predicative, because it does not use power sets.

\subsection{Contribution and Overview}
In \cite{Partiality} a quotient inductive-inductive type (QIIT, ~\cite{DBLP:conf/fossacs/AltenkirchCDKF18}) is used
to construct the non-termination effect in domain theory.
A quotient inductive-inductive type is a construction from homotopy type theory that allows one to simultaneously define an inductive type with a quotient on it, in parallel with an inductively defined relation on this type.
In the present paper, we extend the method of \cite{Partiality} and present DCPO algebras as a general framework for constructing algebraic effects in domain theory.

We start this paper by recalling quotient inductive-inductive types in Section~\ref{sec:prelims-qiits}.
After that we describe and compare several methods that one can use to construct DCPO algebras in Section~\ref{sec:constructing-cpos}.
We also discuss the main idea of our construction in that section.
In Sections~\ref{sec:signatures} and~\ref{sec:algebras}, we define a notion of signature and algebra for it,
and we illustrate these notions using the Plotkin powerdomain.
We construct an initial algebra for every signature in Section~\ref{sec:initial-algebra},
and we present numerous examples in Section~\ref{sec:examples}.
Finally, we discuss related work in Section~\ref{sec:related-work},
and we conclude in Section~\ref{sec:conclusion}.


\subsection{Formalization}
As a foundation, our work uses homotopy type theory (HoTT).
In particular, we use QIITs and some of the arguments make use of function extensionality.
We do not need to use the univalence axiom in our work.
Our work is formalized using Cubical Agda~\cite{cubical}.
The artifact~\cite{artifact} includes a file with references to the Agda code for all the definitions and theorems.%
\footnote{A clickable html version can be found here: \href{https://simchavc.github.io/artifacts/qiits/Paper.html}{https://simchavc.github.io/artifacts/qiits/Paper.html}.}

\section{Preliminaries on Quotient Inductive-Inductive Types}
\label{sec:prelims-qiits}
We start this paper by recalling \emph{quotient inductive-inductive types} (QIIT)~\cite{DBLP:conf/fossacs/AltenkirchCDKF18,kaposi:2020a}.
In essence, quotient inductive-inductive types combine the ideas of \emph{inductive-inductive types} (IIT)~\cite{nordvallforsberg2013thesis}
and \emph{quotient types} (QIT)~\cite{univalentfoundationsprogram:2013}.
Inductive-inductive types are used to define a type $A$ together with a type family on $A$,
and quotient inductive types are used to define types by specifying constructors and equations between them.

To get an understanding of what quotient inductive-inductive types are,
we consider an example,
and we refer the reader for a formal definition to the literature~\cite{DBLP:conf/fossacs/AltenkirchCDKF18}.
In addition,
we only look at how to specify QIITs
and we do not give their elimination rules.
Let us start with the type of ``finite sets of natural numbers'', which can be represented as a QIT in various ways, see \cite{Fruminetal2018}.
Its constructors are given in Figure~\ref{fig:qit-fin-set}.
The $\emptyList$ and $\cons{x}{xs}$ constructors are the same as one would expect for regular lists.
Apart from the regular point constructors, we also have path constructors, which states that $\finSet$ actually is the type of finite sets of natural numbers.
First, \nameref{rule:fin-set-cons-comm} makes sure that the order of the elements in a finite set does not matter.
Secondly, \nameref{rule:fin-set-cons-dup} says that adding an element is idempotent.
Combining these two path constructors gives us the structure we would expect.
However, by introducing these path constructor, we now have multiple paths between finite sets.
For example, we can identify the list $\cons{x}{\cons{x}{\emptyList}}$ with $\cons{x}{\emptyList}$ either by using \nameref{rule:fin-set-cons-dup},
or by the composition of \nameref{rule:fin-set-cons-dup} and \nameref{rule:fin-set-cons-comm}.
These paths are not equal to each other.
This introduces extra structure which is not present when we consider finite multisets.
We therefore add the path constructor \nameref{rule:fin-set-set}.
As this path constructor asserts that $\finSet$ is a set,
we typically write this by adding a rule with the conclusion $\isSet(\finSet)$.

\begin{figure}
    \begin{mathpar}
        \inferrule
            { }
            {\emptyList : \finSet}

        \inferrule
            {x : \Nat \and xs : \finSet}
            {\cons{x}{xs} : \finSet}
        \\
        \inferrule
            [\textsc{Cons-Comm}]
            {x, y : \Nat \and xs : \finSet}
            {\cons{x}{\cons{y}{xs}} = \cons{y}{\cons{x}{xs}}}
            [\textsc{Cons-Comm}]\label{rule:fin-set-cons-comm}

        \inferrule
            [\textsc{Cons-Dup}]
            {x : \Nat \and xs : \finSet}
            {\cons{x}{\cons{x}{xs}} = \cons{x}{xs}}
            [\textsc{Cons-Dup}]\label{rule:fin-set-cons-dup}
        \\
        \inferrule
            [FinSet-Set]
            {xs, ys : \finSet \and p, q : xs = ys}
            {p = q}
            [\textsc{FinSet-Set}]\label{rule:fin-set-set}
    \end{mathpar}
    \caption{Constructors for $\finSet$}\label{fig:qit-fin-set}
\end{figure}

Next up, we shift our focus to an IIT.
As an example, we consider the type of sorted lists~\cite[Example~3.2]{nordvallforsberg2013thesis}.
Its constructors are given in Figure~\ref{fig:iit-sorted-list}.
The predicate $x \leqList xs$ asserts that $x$ is smaller than all elements of the sorted list $xs$.
This predicate allows us to define the type $\sortedList$.
We have that $\emptyList$ is sorted, and if $xs$ is sorted and we have a proof $p$ of the fact that $x \leqList xs$,
then we have that $\consSorted{x}{p}{xs}$ is again a sorted list.
To define the predicate, we know that $x$ is always smaller than all elements of the empty list.
Furthermore, if $p$ is a proof of the fact that $x \leqList xs$,
then from $n \leq x$ we can conclude that $n$ is also smaller than all elements of $\consSorted{x}{p}{xs}$.

\begin{figure}
    \begin{mathpar}
        \inferrule
            { }
            {\emptyList : \sortedList}

        \inferrule
            {x : \Nat \and xs : \sortedList \and p : x \leqList xs}
            {\consSorted{x}{p}{xs} : \sortedList}
        \\
        \inferrule
            {x : \Nat}
            {x \leqList \emptyList}

        \inferrule
            {n, x : \Nat \and xs : \sortedList \and p : x \leqList xs}
            {n \leq x \to n \leqList \consSorted{x}{p}{xs}}
    \end{mathpar}
    \caption{Constructors for $\sortedList$}\label{fig:iit-sorted-list}
\end{figure}

Finally, we combine the previous two example into a QIIT.
This gives us sorted lists where any two consecutive elements are unequal.
Its constructors are given in Figure~\ref{fig:qiit-sorted-list}.
We again have the $\emptyList$ and $\consSorted{x}{p}{xs}$ constructors to create sorted lists
and corresponding constructors to show that these lists are sorted.
We now also add a constructor which states that $\consSorted{x}{q}{\consSorted{x}{p}{xs}}$ and $\consSorted{x}{p}{xs}$ are equal.
Again, like in the first example, we also add a set truncation constructor to remove the extra structure we are not interested in.

\begin{figure}
    \begin{mathpar}
        \inferrule
            { }
            {\emptyList : \strictSortedList}

        \inferrule
            {x : \Nat \and xs : \strictSortedList \and p : x \leqList xs}
            {\consSorted{x}{p}{xs} : \strictSortedList}
        \\
        \inferrule
            {x : \Nat \and xs : \strictSortedList \and p : x \leqList xs \and q : x \leqList \consSorted{x}{p}{xs}}
            {\consSorted{x}{q}{\consSorted{x}{p}{xs}} = \consSorted{x}{p}{xs}}

        \inferrule
            { }
            {\isSet(\strictSortedList)}
        \\
        \inferrule
            {x : \Nat}
            {x \leqList \emptyList}

        \inferrule
            {n, x : \Nat \and xs : \strictSortedList \and p : x \leqList xs}
            {n \leq x \to n \leqList \consSorted{x}{p}{xs}}
    \end{mathpar}
    \caption{Constructors for $\strictSortedList$}\label{fig:qiit-sorted-list}
\end{figure}

\section{Constructions of DCPOs}
\label{sec:constructing-cpos}
In this section,
we discuss two constructions of \emph{directed complete partial order} (DCPO)
that can be used for DCPO algebras.
The first way is given by \emph{presentations} of DCPOs~\cite{JUNG2008209}.
Intuitively, a presentation specifies a basis and relations on elements on that basis.
From a presentation one can construct a DCPO
in a way similar to the rounded ideal completion~\cite{abramskyJung}.
The other construction is by using quotient inductive-inductive types, following ideas by Altenkirch, Danielsson, and Kraus~\cite{DBLP:conf/fossacs/AltenkirchCDKF18}.

Let us start by explaining why we are interested in DCPOs rather than $\omega$-CPOs.
An $\omega$-CPO is a partial order where every increasing sequence $(d_i)_{i \in \mathbb{N}}$,
has a least upper bound.
The notion of DCPO is stronger: here we require that every \emph{directed} family has a least upper bound.
Recall that a family $\alpha : I \to D$ is directed if $I$ is inhabited,
and for all $i, j : I$, there exists a $k : I$ such that $\alpha(i), \alpha(j) \sqsubseteq \alpha(k)$.
Concretely, a DCPO is given by a partially ordered set $D$ such that every directed family has a least upper bound.
While every DCPO is also an $\omega$-CPO,
proving that every $\omega$-CPO is a DCPO requires the index types $I$ to be countable and the axiom of choice~\cite{Markowsky1976ChaincompletePA}.

If one works predicatively with DCPOs,
then one has to keep careful track of the universe levels involved~\cite{dejong:2021,dejong:2021b}.
The type of DCPOs is indexed by three universe levels.
More specifically,
we have a type $\DCPO_{\Universe_1, \Universe_2, \Universe_3}$
that consists of DCPOs $D$
where the carrier lives in $\Universe_1$
and the order $\sqsubseteq$ is valued in $\Universe_2$ (we have ${ \sqsubseteq } : D \rightarrow D \rightarrow \Universe_2$)
and such that every directed family indexed by a type $I : \Universe_3$
has a least upper bound.
In the formalization the universe levels are explicitly recorded, but in this paper we leave them implicit.

\subsection{The Rounded Ideal Completion and Presentations}
We start by discussing presentations, and to introduce these, we first discuss the \emph{rounded ideal completion}.
The rounded ideal completion gives us a way to construct DCPOs by specifying an \emph{abstract basis}~\cite{abramskyJung}.
Elements of the basis are the basic approximations for the elements in the DCPO that we construct.

\begin{definition}
    \label{def:abstract-basis}
    A pair $(B, \prec)$, where $\prec {}\colon B \to B \to \Omega$, is an \textbf{abstract basis} if $\prec$ is transitive,
    and it has the following interpolation properties:
    \begin{itemize}
        \item if $x : B$, then there exists $y : B$ such that $y \prec x$;
        \item if $x_1, x_2, z : B$ such that $x_1, x_2 \prec z$, then there exists $y : B$ such that $x_1, x_2 \prec y \prec z$.
    \end{itemize}
\end{definition}

An instance of an abstract basis is given by the ordered rational numbers~\cite[Definition 106]{dejong:2023}.
Formally, we define the set $B$ to be the collection of rational numbers,
and we say that $p \prec q$ if $p < q$.
The first condition says that for every $p : \mathbb{Q}$ we have $r$ such that $p < r$,
and the second condition says that whenever we have rational numbers $p_1, p_2, q : \mathbb{Q}$
such that $p_1 < q$ and $p_2 < q$,
we have another rational number $r$ such that $p_1 , p_2 < r < q$.

We now consider rounded ideals of such an abstract basis $(B, \prec)$.
These ideals form the underlying type of the rounded ideal completion of $B$.
Following the intuition that these ideals should approximate information,
they are defined as directed lower-subsets of $B$.

\begin{definition}
    Let $(B, \prec)$ be an abstract basis. A predicate $X \colon B \to \Omega$%
    \footnote{Note that we think of $X$ as subset $X \subseteq B$.}
    is a \textbf{rounded ideal} if
    $X$ is an directed lower set.
    The type of rounded ideals, $\RoundedIdeal{B, \prec}$,
    forms a DCPO with subset inclusion as the ordering and union of subsets as the supremum.
\end{definition}

The rounded ideal completion allows us to construct continuous DCPOs by specifying a basis.
For instance, we can construct the Cantor and Baire space as the rounded ideal completion of lists ordered by the prefix relation \cite[Definition 87]{dejong:2023},
and intervals of real numbers as the rounded ideal completion of rational intervals ordered by reverse strict inclusion~\cite{dejong:2023,vanderWeide2024}.

Note that an abstract basis only specifies basis elements and an order on them.
We are thus unable to specify inequalities between basis elements
and suprema of collections of basis elements, which means we cannot require Scott continuity.
For this reason, we consider the more general notion of \emph{DCPO presentations}~\cite{JUNG2008209}.

\begin{definition}\label{def:presentation}
A \textbf{DCPO presentation} consists of a preorder $P$
and a binary relation $\covered$ on $P$ and directed subsets of $P$.
\end{definition}

Intuitively, the preorder $P$ contains the generator elements for the DCPO that we present.
Whenever a generator $p$ is covered by a directed subset $U$, written $p \covered U$,
we intuitively think of $p$ being below the supremum of $U$ in the presented DCPO.
Bidlingmaier, Faissole, and Spitters use presentations,
although for $\omega$-CPO rather than DCPOs, to study the semantics of probabilistic programming languages~\cite[Assumption~1]{bidlingmaier:2021}.
They assume the existence of the free $\omega$-CPO as an axiom, and they discuss various justifications such as impredicativity and QIITs.

One way to justify the existence of the free DCPO $\overline{P}$ for a presentation,
is via an impredicative construction similar to the rounded ideal completion.
Specifically, we say that $I \subseteq P$ is an \emph{ideal} if it is a lower set,
and for each $U \subseteq I$ such that $p \covered U$, we have $p \in I$.
The type of ideals, $\Ideal{P}$, has a complete lattice structure.
For a subset $M \subseteq P$, we write $\langle M \rangle$ for the least ideal containing $M$.
The presented DCPO $\overline{P}$ is defined to be the least sub-DCPO of $\Ideal{P}$,
containing all $\langle \{ q \mid q \leq p \} \rangle$ for $p : P$.
Note that $\overline{P}$ in general lives in a higher universe than $P$,
unless we work in impredicative foundations.

\subsection{Quotient Inductive-Inductive Types}
\label{sec:QIIT}
\begin{figure}[t]
  \begin{mathpar}
    \inferrule
    {}
    {\bot : A_\bot}

    \inferrule
    {a : A}
    {\eta(a) : A_\bot}

    \inferrule
    {a \sqsubseteq b \and b \sqsubseteq a}
    {a = b}

    \inferrule
    {s : \mathbb{N} \rightarrow A_\bot \and p : \prod_{n : \mathbb{N}} s(n) \sqsubseteq s(n + 1)}
    {\bigsqcup(s, p) : A_\bot}
    \\
    \inferrule
    {}
    {x \sqsubseteq x}

    \inferrule
    {x \sqsubseteq y \and y \sqsubseteq z}
    {x \sqsubseteq z}

    \inferrule
    {}
    {\bot \sqsubseteq x}

    \inferrule
    {n : \mathbb{N}}
    {s(n) \sqsubseteq \bigsqcup(s, p)}

    \inferrule
    {\prod_{n : \mathbb{N}} s(n) \sqsubseteq x}
    {\bigsqcup(s, p) \sqsubseteq x}
\end{mathpar}
\caption{Constructors for $A_\bot$ and $\sqsubseteq$}
\label{fig:partiality}
\end{figure}

Another way to construct DCPOs is by using quotient inductive-inductive types~\cite{DBLP:conf/fossacs/AltenkirchCDKF18}.
The reason why one can use QIITs to construct DCPOs,
is because they allow us to simultaneously define a type $A$ together with a relation $R$ between $A$ and $A$
by specifying constructors for both $A$ and $R$.
Because $A$ and $R$ are defined simultaneously,
the constructors of $A$ and $R$ may refer to each other.

This idea was used by Altenkirch, Danielsson, and Kraus to construct the partiality monad~\cite{Partiality}.
More specifically, they construct the free $\omega$-CPO as a QIIT,
and the constructors of this QIIT are given in Figure~\ref{fig:partiality}.
In the remainder of this paper, we generalize their work to construct DCPO algebras.

We illustrate the idea behind our development using an example.
Instead of constructing the free pointed DCPO as a QIIT,
we show how to construct another example, namely the \emph{powerdomain}.
The powerdomain gives a more interesting QIIT,
because it has recursive operations
and one needs to add constructors to the QIIT expressing the Scott continuity of each operation.
In addition,
one can construct the partiality monad for DCPOs in more elementary ways, see the work by Escard\'o and Knapp \cite[Theorem 1]{escardo:2017} and by De Jong and Escard\'o \cite[Theorem 25]{dejong:2021}.

Given a DCPO $D$, the powerdomain $\plotkinPower{D}$ over $D$ is defined to be the free DCPO generated by constructors
$\powerIncl{\blank} : D \to \plotkinPower{D}$ and $\formalUnion : \plotkinPower{D} \to \plotkinPower{D} \to \plotkinPower{D}$
such that $\formalUnion$ is associative, commutative, and idempotent.
We construct the DCPO $\plotkinPower{D}$ together with its order $\sqsubseteq$ as the quotient inductive-inductive type
with the point and path constructors given in Figures \ref{fig:power-domain-dcpo}, \ref{fig:power-domain-operations} and \ref{fig:power-domain-ineqs}.
We omit the constructors which state that $\powerIncl{\blank}$ and $\formalUnion$ are monotone.
Note that we write down the equations of the powerdomain as inequalities,
because the formalism that we introduce in this paper, is based on inequalities rather than equalities.

There are a couple of interesting aspects to this quotient inductive-inductive type.
First, let us note the key difference with the QIIT by Altenkirch, Danielsson, and Kraus~\cite{Partiality}.
Since we look at DCPOs,
every directed family must have a supremum
rather than only chains indexed by the natural numbers.
For this reason, the constructor $\bigsqcup$ quantifies over all possible directed families, whose index type lives in some universe $\Universe$.
The type $\plotkinPower{D}$ thus lives in a larger universe than $\Universe$,
and, if $D$ also lives in $\Universe^+$, the powerdomain $\plotkinPower{D}$ is a type in the successor universe $\Universe^+$.

The second interesting aspect of the definition of this QIIT is that, in essence,
it is purely specified by describing the constructors and the inequalities.
On top of the constructions and rules for a DCPO (Figure \ref{fig:power-domain-dcpo}), the constructors of the powerdomain and their continuity are specified in Figure \ref{fig:power-domain-operations}, and
the inequalities are specified in Figure \ref{fig:power-domain-ineqs}.
If we were to specify any other algebraic structure on DCPOs,
we would only need to modify the constructors in Figures \ref{fig:power-domain-operations} and \ref{fig:power-domain-ineqs}.
In the remainder, we generalize this construction,
and we show how to use QIITs to construct a wide variety of algebraic structures on DCPOs.

\begin{figure}

  \begin{mathpar}
    \inferrule
        { }
        {x \Leq{} x}

    \inferrule
        {x \Leq{} y \and y \Leq{} z}
        {x \Leq{} z}

    \inferrule
        { }
        {\isProp(x \Leq{} y)}
    \\
    \inferrule
        {x \Leq{} y \and y \Leq{} x}
        {x = y}

    \inferrule
        { }
        {\isSet(\plotkinPower{D})}
    \\
    \initialLub{} : \prod_{\alpha : I \to \plotkinPower{D}} \isDirected{\alpha} \to \plotkinPower{D}
    \\
    \inferrule[\textsc{$\initialLub{}$-Is-Sup-1}]
        {\alpha : I \to \plotkinPower{D} \and \delta : \isDirected{\alpha}}
        {\prod_{i : I} \alpha(i) \Leq{} \initialLub{}(\alpha, \delta)}

    \inferrule[\textsc{$\initialLub{}$-Is-Sup-2}]
        {\alpha : I \to \plotkinPower{D} \and \delta : \isDirected{\alpha}}
        {\prod_{v : \plotkinPower{D}} \isUpperbound(v, \alpha) \to \initialLub{}(\alpha, \delta) \Leq{} v}
  \end{mathpar}
  \caption{Constructors for the DCPO structure of $\plotkinPower{D}$}
  \label{fig:power-domain-dcpo}

  \begin{mathpar}
    \powerIncl{\blank} : D \to \plotkinPower{D}

    \formalUnion : \plotkinPower{D} \to \plotkinPower{D} \to \plotkinPower{D}

    \inferrule
        {\alpha_1, \alpha_2 : I \to \plotkinPower{D} \and \isDirected{\alpha_1} \and \isDirected{\alpha_2}}
        {\initialLub{} \alpha_1 \formalUnion \initialLub{} \alpha_2 = \initialLub{} (\lambda i.\, \alpha_1(i) \formalUnion \alpha_2(i))}

    \inferrule
        {\alpha : I \to D \and \isDirected{\alpha}}
        {\powerIncl{\dcpoSup{D} \alpha} = \initialLub{} (\powerIncl{\blank} \circ \alpha)}
  \end{mathpar}
  \caption{Operations of $\plotkinPower{D}$ and their continuity}
  \label{fig:power-domain-operations}

  \begin{mathpar}
    \inferrule
        { }
        {x \formalUnion y \Leq{} y \formalUnion x}

    \inferrule
        { }
        {(x \formalUnion y) \formalUnion z \Leq{} x \formalUnion (y \formalUnion z)}

    \inferrule
        { }
        {x \formalUnion x \Leq{} x}

    \inferrule
        { }
        {x \Leq{} x \formalUnion x}
  \end{mathpar}
  \caption{Inequalities involving operations of $\plotkinPower{D}$}
  \label{fig:power-domain-ineqs}
\end{figure}

\section{Signatures}
\label{sec:signatures}
In this section we define a notion of signature for DCPO algebras,
and we demonstrate this notion using the Plotkin powertheory for a fixed DCPO $D$~\cite{abramskyJung,Plotkin76}.
Our notion of signature is inspired by universal algebra
where algebraic structures are described by specifying operations and equations.
However, our signatures are specified by operations and inequalities instead.

Let us start by defining how to specify operations.
A single operation is represented using a monomial.


\begin{definition}
    A \textbf{monomial} $\monomial$
    consists of
    an arity $\monomialArity{\monomial} : \Universe$
    and a DCPO $\monomialConst{\monomial}$.
    We define $\interp{\monomial}$ as the functor which sends a DCPO $X$ to the DCPO
    $(\monomialArity{\monomial} \to X) \times \monomialConst{\monomial}$.
    We refer to $\monomialConst{\monomial}$ as the constant of the functor $\interp{\monomial}$.
\end{definition}

A monomial $\monomial$ describes an operation $\interp{\monomial}(X) \to X$ on a DCPO $X$.
The Plotkin powertheory of $D$ has two operations,
namely the inclusion operation $D \to X$ and the formal union operation $X \to X \to X$.
The inclusion operation is represented by a monomial with arity $\bot$ and constant $D$,
and formal union can be represented by a monomial with arity $\Bool$ and the unit DCPO as the constant.

To package multiple operations together, we consider families of monomials,
called \emph{presignatures}.

\begin{definition}
    A \textbf{presignature} $\Sigma : \PreSig$ consists of
    a type of constructor names in $\sigConstrNames{\Sigma} : \Universe$
    and a family of monomials, $\sigMonomialFam{\Sigma} : \sigConstrNames{\Sigma} \to \Monomial$.
    We write $\sigMonomialArity{\Sigma}(a)$ and $\sigMonomialConst{\Sigma}(a)$ for the arity and the constant of $\sigMonomialFam{\Sigma}(a)$ respectively.
\end{definition}

As the Plotkin powertheory of $D$ has two operations,
we define the type of constructor names of its presignature to be a type with exactly two elements: $\inclusionName$ and $\formalUnionName$.
The family of monomials is constructed by sending both constructor names to their respective monomial we defined above.

\begin{remark}\label{remark:signature-w-types-comparison}
    The structure of a presignature looks similar to a container~\cite{abbott:2005}, which are used to define W-types~\cite{Lof84},
    but for each $a : \sigConstrNames{\Sigma}$, there is an additional DCPO $\sigMonomialConst{\Sigma}(a)$.
    For containers this is not needed, as one can introduce a new operation for each constant $c : \sigMonomialConst{\Sigma}(a)$.
    However, this does not take into account the DCPO-structure of $\sigMonomialConst{\Sigma}(a)$.
    We want the maps to be Scott continuous (as will be detailed in Section~\ref{sec:algebras}) and these should also be continuous with respect to the DCPO-structure of $\sigMonomialConst{\Sigma}(a)$.
 For example, the monomial for $\inclusionName$ should represent a Scott continuous map $D \to \plotkinPower{D}$,
    which is different from a family of constants $\powerIncl{d}$ for each $d : D$.
\end{remark}

%

Inequalities are described by their left- and right-hand side,
which are constructed using variables and the operations specified by a presignature.
The notion of a \emph{term} formalizes this concept.

\begin{definition}
    Let $\Sigma$ be a presignature and $V$ a type of variables.
    The type $\Term_{\Sigma, V}$ of \textbf{terms} is inductively generated by the following constructors.
    \begin{mathpar}
        \inferrule
            {x : V}
            {\var{x} : \Term_{\Sigma, V}}

        \inferrule
            {a : \sigConstrNames{\Sigma}
                \and f : \sigMonomialArity{\Sigma}(a) \to \Term_{\Sigma, V}
                \and c : \sigMonomialConst{\Sigma}(a)}
            {\constr{a}{f}{c} : \Term_{\Sigma, V}}
    \end{mathpar}
    Each $\var{x}$ represents a variable,
    and $\constr{a}{f}{c}$ represents the application of the operation named $a$ with arguments $f, c$.
\end{definition}

For the Plotkin powertheory of $D$, terms are constructed using inclusion and formal union operations.
Specifically, terms are constructed using the following derivation rules.
\begin{mathpar}
    \inferrule
        {d : D}
        {\powerIncl{d} : \Term_{\Sigma, V}}

    \inferrule
        {t_1, t_2 : \Term_{\Sigma, V}}
        {t_1 \formalUnion t_2 : \Term_{\Sigma, V}}
\end{mathpar}





\begin{definition}
  Let $\Sigma$ be a presignature.
  A \textbf{formal inequality} for $\Sigma$ consists of a type $V : \Universe$
  and two terms $t_1, t_2 : \Term_{\Sigma, V}$.
  We write $\Ineq_{\Sigma}$ for the type of formal inequalities for $\Sigma$.
\end{definition}

The type $V$ represents the variables in an inequality.
Typically, we write formal inequalities as $\formalIneqVar{t_1}{V}{t_2}$,
and we omit $V$ in case it is clear from the context.

In the Plotkin powertheory of $D$, we want the formal union to be commutative, associative and idempotent up to equality.
We can describe equalities using antisymmetry and two formal equalities.
However, it turns out that it is enough to only require the following formal inequalities.
\[
\formalIneq{x \formalUnion y}{y \formalUnion x} \quad \quad
\formalIneq{(x \formalUnion y) \formalUnion z}{x \formalUnion (y \formalUnion z)} \quad \quad
\formalIneq{x \formalUnion x}{x} \quad \quad
\formalIneq{x}{x \formalUnion x}
\]
We now construct these formal inequalities.
As all formal inequalities are constructed similarly, let us only look at the first one.
It contains two free variables, so we take $V = \Bool$.
We now define the terms representing the variables as $x = \var{\false}, y = \var{\true}$.
We then combine these using $\formalUnion \colon \Term_{\Sigma, V} \to \Term_{\Sigma, V} \to \Term_{\Sigma, V}$,
to create both the left- and right-hand side of the equation.

Finally, we put everything together and we define the notion of a \emph{signature}.
\begin{definition}
  A \textbf{signature} $\Sigma : \Sig$ consists of
  a presignature $\sigPre{\Sigma} : \PreSig$,
  a type of inequality names $\sigIneqNames{\Sigma} : \Universe$,
  and a family of formal inequalities $\sigIneqFam{\Sigma} : \sigIneqNames{\Sigma} \to \Ineq_{\sigPre{\Sigma}}$.
\end{definition}

All in all, a signature $\Sigma : \Sig$ describes operations and inequalities.
The operations are specifies by
a type of constructor names, $\sigConstrNames{\Sigma}$,
which specifies their names,
and a monomial $\sigMonomialFam{\Sigma}(a)$ for each $a : \sigConstrNames{\Sigma}$.
The arity and constant of an operation $a : \sigConstrNames{\Sigma}$ is given by a type $\sigMonomialArity{\Sigma}(a)$ and a DCPO $\sigMonomialConst{\Sigma}(a)$, respectively.
The inequalities are specified by a type $\sigIneqNames{\Sigma}$
and a family $\sigIneqFam{\Sigma}$ of formal inequalities over $\sigIneqNames{\Sigma}$.


We end this section by constructing a signature for the Plotkin powertheory of $D$.
We already defined its presignature.
We define $\sigIneqNames{\Sigma}$ as a four element type, as we want four formal inequalities.
The family of formal inequalities is defined using the formal inequalities we described above.

\begin{remark}
    Note that our terms and inequalities do not involve suprema contrasting them to presentations~(Definition~\ref{def:presentation}).
    However, when we define the initial algebra in Section~\ref{sec:initial-algebra}, we make use of the suprema in certain inequalities to require Scott continuity of the operations.
    Hence, an abstract basis~(Example~\ref{def:abstract-basis}) is insufficient, whereas a presentation is more general.
    The reason behind the choice of our definition of terms and inequalities is its similarity to what one would write down in algebra.
\end{remark}

\section{Algebras}
\label{sec:algebras}


Next we define algebras for the notion of signature that we defined in the previous section.
We do this in two steps.
First we look at \emph{prealgebras},
which consists of a DCPO together with the operations described by the signature.

\begin{definition}
  \label{def:prealgebra}
  Let $\Sigma$ be a signature.
  A \textbf{prealgebra} for $\Sigma$ consists of
  a DCPO $X$, called the underlying DCPO,
  and a Scott continuous map $\interpretation{X}{a} : \interp{\sigMonomialFam{\Sigma}(a)}(X) \to X$
  for each constructor name $a : \sigConstrNames{\Sigma}$.
  The type of prealgebras for $\Sigma$ is denoted by $\PreAlg{\Sigma}$.
\end{definition}

We can consider prealgebras as algebras for a functor.
Each presignature $\Sigma$ gives rise to a \emph{polynomial} functor $\interp{\Sigma}$,
which sends DCPOs $X$ to $\sum_{a : \sigConstrNames{\Sigma}} (\sigMonomialArity{\Sigma}(a) \to X) \times \sigMonomialConst{\Sigma}(a)$.
A prealgebra for $\Sigma$ is the same as an algebra for the functor $\interp{\Sigma}$.
The reason why we phrase it slightly differently in Definition \ref{def:prealgebra}, is because this way we do not need to mention type indexed coproducts of DCPOs.

Next we define morphisms between prealgebras.

\begin{definition}
  \label{def:prealgebra-mor}
  Let $X, Y$ be prealgebras for a signature $\Sigma$.
  An \textbf{algebra morphism} from $X$ to $Y$ is a Scott continuous map $f : X \to Y$ such that,
  for each $a : \sigConstrNames{\Sigma}$, we have $\interpretation{Y}{a} \circ \interp{\sigMonomialFam{\Sigma}(a)}(f) = f \circ \interpretation{X}{a}$.
\end{definition}

Every signature $\Sigma$ gives rise to a category whose objects are prealgebras and whose morphisms are algebra morphisms.
Next we define \emph{algebras}, which are prealgebras in which each formal inequality of the signature is satisfied.
To do so, we first show how to interpret the terms of formal inequalities.

\begin{definition}
  \label{def:terms-interpretation}
    Let $\Sigma$ be a presignature, $V$ a type of variables, $t : \Term_{\Sigma, V}$ a term,
    $X$ a prealgebra for $\Sigma$, and $\varAssignment : V \to X$ a variable assignment.
    We define $\interp{t}_{X, \varAssignment} : X$, the \textbf{interpretation} of $t$ in $X$,
    by recursion. 
    \begin{mathpar}
        \interp{\var{x}}_{X, \varAssignment} = \varAssignment(x)

        \interp{\constr{a}{f}{c}}_{X, \varAssignment} = \interpretation{X}{a}((\lambda b.\ \interp{f(b)}_{X, \varAssignment}), c)
    \end{mathpar}
    Variables get interpreted via $\varAssignment$.
    The term $\constr{a}{f}{c}$, representing the application of the operation named $a$ with arguments $f$ and $c$,
    is interpreted by performing the operation named $a$ in the prealgebra $X$.
    We often omit the subscript $X$, if it is clear from context which prealgebra we are working in.
\end{definition}

Note that this interpretation is natural in the prealgebra $X$.
\begin{proposition}
    Let $\Sigma$ be a presignature, $V$ a type of variables, and $t : \Term_{\Sigma, V}$ a term.
    Furthermore, let $X, Y$ be prealgebras for $\Sigma$ and $f : X \to Y$ a prealgebra morphism.
    Then, for each variable assignment $\varAssignment : V \to X$, we have that $\interp{t}_{Y, f \circ \varAssignment} = f(\interp{t}_{X, \varAssignment})$.
\end{proposition}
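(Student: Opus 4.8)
The plan is to prove this by induction on the structure of the term $t : \Term_{\Sigma, V}$, using the inductive definition of terms from Definition~\ref{def:terms-interpretation}. There are two cases to consider: the variable case $\var{x}$ and the constructor case $\constr{a}{g}{c}$. The statement to establish is $\interp{t}_{Y, f \circ \varAssignment} = f(\interp{t}_{X, \varAssignment})$ for all $\varAssignment : V \to X$, and since this equality must hold for every variable assignment, I would set up the induction so that the inductive hypothesis is suitably general (quantified over $\varAssignment$, or more precisely, the hypothesis already applies at the relevant assignment in each case).

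First I would handle the variable case. Here $t = \var{x}$ for some $x : V$. Unfolding the interpretation on both sides using the defining equation $\interp{\var{x}}_{X, \varAssignment} = \varAssignment(x)$, the left-hand side becomes $\interp{\var{x}}_{Y, f \circ \varAssignment} = (f \circ \varAssignment)(x) = f(\varAssignment(x))$, while the right-hand side is $f(\interp{\var{x}}_{X, \varAssignment}) = f(\varAssignment(x))$. These are definitionally equal, so this case closes immediately.

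Next I would handle the constructor case, $t = \constr{a}{g}{c}$ where $g : \sigMonomialArity{\Sigma}(a) \to \Term_{\Sigma, V}$ and $c : \sigMonomialConst{\Sigma}(a)$. The inductive hypothesis gives, for each $b : \sigMonomialArity{\Sigma}(a)$, that $\interp{g(b)}_{Y, f \circ \varAssignment} = f(\interp{g(b)}_{X, \varAssignment})$. Unfolding the interpretation via its defining equation, the left-hand side is $\interpretation{Y}{a}((\lambda b.\ \interp{g(b)}_{Y, f \circ \varAssignment}), c)$. Applying the inductive hypothesis pointwise and using function extensionality, the inner function equals $\lambda b.\ f(\interp{g(b)}_{X, \varAssignment}) = f \circ (\lambda b.\ \interp{g(b)}_{X, \varAssignment})$. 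The key step is now to invoke the fact that $f$ is a prealgebra morphism (Definition~\ref{def:prealgebra-mor}), which states $\interpretation{Y}{a} \circ \interp{\sigMonomialFam{\Sigma}(a)}(f) = f \circ \interpretation{X}{a}$; unfolding the functorial action $\interp{\sigMonomialFam{\Sigma}(a)}(f)$ on the pair $(\lambda b.\ \interp{g(b)}_{X, \varAssignment}, c)$, which acts as $(h, c) \mapsto (f \circ h, c)$, this morphism equation rewrites $\interpretation{Y}{a}((f \circ (\lambda b.\ \interp{g(b)}_{X, \varAssignment})), c)$ into $f(\interpretation{X}{a}((\lambda b.\ \interp{g(b)}_{X, \varAssignment}), c))$, which is precisely $f(\interp{\constr{a}{g}{c}}_{X, \varAssignment})$.

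The main obstacle, though a mild one, is bookkeeping around the two distinct uses of extensionality: first I need function extensionality to turn the pointwise inductive hypotheses into an equality of the argument functions $\lambda b.\ \interp{g(b)}_{Y, f \circ \varAssignment} = f \circ (\lambda b.\ \interp{g(b)}_{X, \varAssignment})$, and second I must match the precise form of the functorial action $\interp{\sigMonomialFam{\Sigma}(a)}(f)$ on pairs so that the morphism equation from Definition~\ref{def:prealgebra-mor} applies on the nose. Since the paper works in HoTT with function extensionality available (as noted in the formalization remarks), and since $f$ being a prealgebra morphism is exactly the commuting square needed, both steps go through cleanly. No continuity or order-theoretic reasoning is required here, as the statement is purely about the equational interaction between interpretation and morphisms.
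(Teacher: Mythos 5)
Your proposal is correct and follows essentially the same route as the paper's proof: structural induction on $t$, with the variable case closing definitionally and the constructor case combining the pointwise induction hypothesis (via function extensionality) with the functorial action of $\interp{\sigMonomialFam{\Sigma}(a)}$ and the prealgebra-morphism equation. The paper's proof is the same chain of equalities, merely leaving the use of function extensionality implicit.
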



\begin{proof}
    We prove this by structural induction on $t$.
    In the case that $t$ is $\var{x}$, we conclude with reflexivity on $f(\varAssignment(x))$.
    Otherwise, if $t$ is $\constr{a}{g}{c}$, we have that
    \begin{align*}
        \interp{\constr{a}{g}{c}}_{Y, f \circ \varAssignment}
          &= \interpretation{Y}{a}((\lambda b.\ \interp{g(b)}_{Y, f \circ \varAssignment}), c) \\
          &= \interpretation{Y}{a}((\lambda b.\ f(\interp{g(b)}_{X, \varAssignment})), c) \\
          &= \interpretation{Y}{a}(\interp{\sigMonomialFam{\Sigma}(a)}(f)((\lambda b.\ \interp{g(b)}_{X, \varAssignment}), c)) \\
          &= f(\interpretation{X}{a}((\lambda b.\ \interp{g(b)}_{X, \varAssignment}), c)) \\
          &= f(\interp{\constr{a}{g}{c}}_{X, \varAssignment})
    \end{align*}
    We can use the induction hypothesis on $g(b) : \Term_{\Sigma, V}$,
    since all $g(b)$ are structurally smaller than $t$.
\end{proof}

With this interpretation at hand, we define what it means for a formal inequality to be valid in a prealgebra.
Intuitively, a formal inequality is valid, if the left-hand side is less than or equal to the right-hand side,
for all possible values for the variables. This is exactly captured by the following definition.

\begin{definition}
  \label{def:ineq-validity}
    Let $\formalIneqVar{t_1}{V}{t_2}$ be a formal inequality over $\Sigma$ with variables $V : \Universe$.
    Furthermore, let $X$ be a prealgebra for $\Sigma$.
    We say that this formal inequality is \textbf{valid} in $X$ if,
    for each $\varAssignment : V \to X$,
    \begin{gather*}
        \interp{t_1}_\varAssignment \sqsubseteq \interp{t_2}_\varAssignment
    \end{gather*}
    Note that this is an inequality in the underlying DCPO of $X$.
    Whenever such a formal inequality is valid in $X$,
    we write it using the usual notation for validity: $X \vDash \formalIneqVar{t_1}{V}{t_2}$.
\end{definition}

An algebra for a signature can now be defined as a prealgebra for that signature,
where all the formal inequalities are valid.

\begin{definition}
    Let $\Sigma$ be a signature. An \textbf{algebra} for $\Sigma$ is a prealgebra for $\Sigma$,
    such that for each $j : \sigIneqNames{\Sigma}$, we have that $X \vDash \sigIneqFam{\Sigma}(j)$.
    We define $\Alg{\Sigma}$ to be the full subcategory of $\PreAlg{\Sigma}$,
    of those prealgebras where all $\sigIneqFam{\Sigma}(j)$ are valid.
\end{definition}

\section{Initial Algebra}
\label{sec:initial-algebra}
We now construct the initial algebra for a signature as a {\em
  quotient inductive-inductive type}, QIIT. In
Section~\ref{sec:prelims-qiits} we have already motivated QIITs and in
Section~\ref{sec:QIIT} we illustrated the use of QIITs as initial DCPO
algebras using the example of the powerdomain in Figures
\ref{fig:power-domain-dcpo}, \ref{fig:power-domain-operations} and \ref{fig:power-domain-ineqs}.

Our goal in this section is to adapt this construction to cover arbitrary signatures.

Recall that we subdivided the constructors of this QIIT into three groups.
The first group, given in Figure~\ref{fig:power-domain-dcpo},
expresses that the type being constructed is a DCPO.
More specifically, it says that we have a partial order
and a supremum operation.
We do not need to modify any of these constructors,
because again we are constructing a DCPO.
What we do need to change, however, are the constructors in Figures \ref{fig:power-domain-operations} and \ref{fig:power-domain-ineqs}.
In these two figures, the operations and inequalities of the signatures are described respectively.
Specifically,
we need to change these constructors so that instead we have the operations and inequalities described by the signature.
Since QIITs give us initial algebras~\cite{DBLP:conf/fossacs/AltenkirchCDKF18},
it follows that the constructed DCPO algebra is indeed initial,
which we prove in this section.


We start this section by constructing the initial algebra for a signature in Section~\ref{sec:initial-algebra-qiit}.
Then we show its initiality in Section~\ref{sec:initial-algebra-initiality} after showing the recursion and induction principles of the QIIT.

\subsection{Construction of the Initial Algebra}
\label{sec:initial-algebra-qiit}

To construct the initial algebra for a signature $\Sigma$,
we simultaneously define a type $\Initial{\Sigma}$ and an ordering relation $\Leq{\Sigma}$ on $\Initial{\Sigma}$ as a QIIT.
We discuss their constructors in three steps.
First, we show that we have constructors to define a DCPO structure on $\Initial{\Sigma}$.
We then show that we have the constructors to lift this DCPO structure to a prealgebra structure for $\Sigma$.
Finally, we lift this structure to an algebra structure for $\Sigma$.

\begin{definition}
    We simultaneously define the type $\Initial{\Sigma}$,
    and its ordering relation $\Leq{\Sigma}$ as a QIIT.
    Their constructors are given in Figures~\ref{fig:initial-dcpo-constructors},\ref{fig:initial-prealg-constructors},\ref{fig:initial-alg-constructors}.
    The constructors of $\Initial{\Sigma}$ are presented as terms with a type,
    while the constructors of $\Leq{\Sigma}$ are given as inference rules.
    We quantify over directed families whose index types live in some universe $\Universe$,
    so both $\Initial{\Sigma}$ and the truth values of $\Leq{\Sigma}$ live in a larger universe than $\Universe$,
    and, if the arities and constants of the monomials in $\Sigma$ live in $\Universe^+$,
    both $\Initial{\Sigma}$ and the truth values of $\Leq{\Sigma}$ also live in $\Universe^+$.
\end{definition}

\begin{figure}
    \begin{mathpar}
        \inferrule[\textsc{Leq-Refl}]
            { }
            {x \Leq{\Sigma} x}
            [\textsc{Leq-Refl}]\label{rule:leq-refl}

        \inferrule[\textsc{Leq-Trans}]
            {x \Leq{\Sigma} y \and y \Leq{\Sigma} z}
            {x \Leq{\Sigma} z}
            [\textsc{Leq-Trans}]\label{rule:leq-trans}

        \inferrule[\textsc{Leq-Prop}]
            { }
            {\isProp(x \Leq{\Sigma} y)}
            [\textsc{Leq-Prop}]\label{rule:leq-prop}
        \\
        \inferrule[\textsc{Leq-Antisym}]
            {x \Leq{\Sigma} y \and y \Leq{\Sigma}}
            {x = y}
            [\textsc{Leq-Antisym}]\label{rule:leq-antisym}

        \inferrule[\textsc{Initial-Set}]
            { }
            {\isSet(\Initial{\Sigma})}
            [\textsc{Initial-Set}]\label{rule:initial-set}
        \\
        \initialLub{\Sigma} : \prod_{\alpha : I \to \Initial{\Sigma}} \isDirected{\alpha} \to \Initial{\Sigma}
        \\
        \inferrule[\textsc{$\initialLub{\Sigma}$-Is-Sup-1}]
            {\alpha : I \to \Initial{\Sigma} \and \delta : \isDirected{\alpha}}
            {\prod_{i : I} \alpha(i) \Leq{\Sigma} \initialLub{\Sigma}(\alpha, \delta)}
            [\textsc{$\initialLub{\Sigma}$-Is-Sup-1}]\label{rule:leq-lub-is-upperbound}

        \inferrule[\textsc{$\initialLub{\Sigma}$-Is-Sup-2}]
            {\alpha : I \to \Initial{\Sigma} \and \delta : \isDirected{\alpha}}
            {\prod_{v : \Initial{\Sigma}} \isUpperbound(v, \alpha) \to \initialLub{\Sigma}(\alpha, \delta) \Leq{\Sigma} v}
            [\textsc{$\initialLub{\Sigma}$-Is-Sup-2}]\label{rule:leq-lub-is-lowerbound-of-upperbounds}
    \end{mathpar}
    \caption{Constructors for the DCPO structure}
    \label{fig:initial-dcpo-constructors}
    \vspace{\floatsep}

    \begin{mathpar}
        \app{a} : \interp{\sigMonomialFam{\Sigma}(a)}(\Initial{\Sigma}) \to \Initial{\Sigma}

        \inferrule[\textsc{$\app{}$-Is-Cont-1}]
            {a : \sigConstrNames{\Sigma} \and \alpha : I \to \interp{\sigMonomialFam{\Sigma}(a)}(\Initial{\Sigma})
                \and \delta : \isDirected{\alpha}}
            {\prod_{i : I} \app{a}(\alpha(i)) \Leq{\Sigma} \app{a}(\dcpoSup{\interp{\sigMonomialFam{\Sigma}(a)}(\Initial{\Sigma})} \alpha)}
            [\textsc{$\app{}$-Is-Cont-1}]\label{rule:leq-app-cont-upperbound}

        \inferrule[\textsc{$\app{}$-Is-Cont-2}]
            {a : \sigConstrNames{\Sigma} \and \alpha : I \to \interp{\sigMonomialFam{\Sigma}(a)}(\Initial{\Sigma})
                \and \delta : \isDirected{\alpha}}
            {\prod_{v : \Initial{\Sigma}} \isUpperbound(v, {\app{a}} \circ \alpha) \to \app{a}(\dcpoSup{\interp{\sigMonomialFam{\Sigma}(a)}(\Initial{\Sigma})} \alpha) \Leq{\Sigma} v}
            [\textsc{$\app{}$-Is-Cont-2}]\label{rule:leq-app-cont-lowerbound-of-upperbounds}
    \end{mathpar}
    \caption{Constructors for the prealgebra structure for $\Sigma$}
    \label{fig:initial-prealg-constructors}
    \vspace{\floatsep}

    \begin{mathpar}
        \inferrule[\textsc{Ineq-Valid}]
            {j : \sigIneqNames{\Sigma} \and \sigIneqFam{\Sigma}(j) = \formalIneqVar{t_1}{V}{t_2} \and \rho : V \to \Initial{\Sigma}}
            {\interp{t_1}_\varAssignment \Leq{\Sigma} \interp{t_2}_\varAssignment}
            [\textsc{Ineq-Valid}]\label{rule:leq-ineq-valid}
    \end{mathpar}
    \caption{Constructors for the algebra structure for $\Sigma$}
    \label{fig:initial-alg-constructors}
\end{figure}

Compared to the powerdomain example from Section~\ref{sec:QIIT}, the QIIT here is presented differently.
In particular, the two constructors for the inclusion and union operations are replaced by $\app{a}$ constructors for each constructor name $a : \sigConstrNames{\Sigma}$.
Moreover, the continuity of each $\app{a}$ is written down by expressing that $\app{a}$ sends directed suprema to suprema in $\Initial{\Sigma}$.
This notion of continuity~\cite[Definition~13]{dejong:2021}, means we do not need to add path constructors to the QIIT to express continuity,
like we did in Section~\ref{sec:QIIT}.

\begin{remark}\label{remark:comparison-w-types-const-dcpo}
    If we unfold the definition of the interpretation of a monomial, we see that $\app{}$ has the following type:
    $\prod_{a : \sigConstrNames{\Sigma}} (\sigMonomialArity{\Sigma}(a) \to \Initial{\Sigma}) \times \sigMonomialConst{\Sigma}(a) \to \Initial{\Sigma}$.
    The introduction rule for W-types~\cite{Lof84} is similar, but does not include $\sigMonomialConst{\Sigma}(a)$.
    Like we noted in Remark~\ref{remark:signature-w-types-comparison},
    we cannot leave out $\sigMonomialConst{\Sigma}(a)$, as this would result in a loss of structure.
    This is also apparent here, as otherwise we would not be able to state that $\app{a}$ is Scott continuous with respect to the constant DCPO $\sigMonomialConst{\Sigma}(a)$.
\end{remark}

The constructors for the DCPO structure on $\Initial{\Sigma}$ are given in Figure~\ref{fig:initial-dcpo-constructors}.
The rules \nameref{rule:leq-refl}, \nameref{rule:leq-trans} and \nameref{rule:leq-prop}
make sure that $\Leq{\Sigma}$ is reflexive, transitive and proposition valued.
Furthermore, we guarantee that $\Leq{\Sigma}$ is antisymmetric
and $\Initial{\Sigma}$ is a set,%
\footnote{We could do without the set truncation,
    as the underlying type of a partial order is always a set~\cite[Definition 4.1]{JongEscardo23}.
    See also \url{https://github.com/martinescardo/TypeTopology/blob/02add316f8a78bc79e5687e4249d9f0174dae1d2/source/UF/Hedberg.lagda\#L90}.}
by adding suitable path constructors.
Finally, the $\initialLub{\Sigma}$ constructor is supposed to give the supremum for directed families.
This is guaranteed by \nameref{rule:leq-lub-is-upperbound} and \nameref{rule:leq-lub-is-lowerbound-of-upperbounds}.
Note that we often write $\initialLub{\Sigma} \alpha$, in case we do not care about the specific proof of directedness.

\begin{lemma}\label{lemma:initial-dcpo-structure}
    The type $\Initial{\Sigma}$ has a DCPO structure.
\end{lemma}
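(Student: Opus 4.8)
The plan is to extract the DCPO structure directly from the QIIT constructors listed in Figure~\ref{fig:initial-dcpo-constructors}, verifying each axiom of a DCPO by naming the appropriate constructor. First I would establish that $\Leq{\Sigma}$ is a partial order on $\Initial{\Sigma}$: reflexivity is given by \nameref{rule:leq-refl}, transitivity by \nameref{rule:leq-trans}, and antisymmetry by \nameref{rule:leq-antisym}. The constructor \nameref{rule:leq-prop} tells us that $x \Leq{\Sigma} y$ is a proposition, so $\Leq{\Sigma}$ is genuinely a partial order rather than a mere preorder with extra structure, and \nameref{rule:initial-set} guarantees that $\Initial{\Sigma}$ is a set, as required for a DCPO carrier.

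Next I would address directed completeness. Given a directed family $\alpha : I \to \Initial{\Sigma}$ with $\delta : \isDirected{\alpha}$, the constructor $\initialLub{\Sigma}$ provides a candidate supremum $\initialLub{\Sigma}(\alpha, \delta) : \Initial{\Sigma}$. I would then show this candidate is a least upper bound using the two remaining constructors: \nameref{rule:leq-lub-is-upperbound} witnesses that $\initialLub{\Sigma}(\alpha, \delta)$ is an upper bound, namely $\prod_{i : I} \alpha(i) \Leq{\Sigma} \initialLub{\Sigma}(\alpha, \delta)$, and \nameref{rule:leq-lub-is-lowerbound-of-upperbounds} witnesses that it is below any other upper bound $v$, i.e.\ for any $v$ with $\isUpperbound(v, \alpha)$ we have $\initialLub{\Sigma}(\alpha, \delta) \Leq{\Sigma} v$. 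Together these exactly assemble a proof of $\isLeastUpperbound$ for the directed family.

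Since the DCPO data is essentially read off from the constructors, there is little genuine mathematical content; the proof is a matter of packaging. The main subtlety I anticipate is ensuring that the supremum operation is well behaved with respect to the set-level structure: because $\Leq{\Sigma}$ is proposition-valued and the least-upper-bound property is therefore itself a proposition, the witness $\initialLub{\Sigma}(\alpha, \delta)$ is independent of the chosen proof $\delta$ of directedness up to the identifications forced by antisymmetry, which justifies writing $\initialLub{\Sigma} \alpha$ when the directedness proof is irrelevant. The only bookkeeping obstacle is confirming that all of this lives coherently at the universe levels stated in the definition, but since the constructors are given precisely at those levels, this is immediate. I would conclude by bundling the carrier $\Initial{\Sigma}$, the order $\Leq{\Sigma}$, and the supremum operation $\initialLub{\Sigma}$ into the required DCPO structure.
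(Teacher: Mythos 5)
Your proof is correct and matches the paper's (implicit) argument exactly: the DCPO structure is read off directly from the constructors in Figure~\ref{fig:initial-dcpo-constructors}, with the partial order axioms given by the first five rules and directed completeness by $\initialLub{\Sigma}$ together with its two supremum rules. Nothing further is needed.
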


Next up, the constructors for the prealgebra structure are given in Figure~\ref{fig:initial-prealg-constructors}.
The $\app{a}$ constructor introduces, for each $a : \sigConstrNames{\Sigma}$,
an operation corresponding to the monomial $\sigMonomialFam{\Sigma}(a)$.
Note that, by Lemma~\ref{lemma:initial-dcpo-structure}, we can actually pass $\Initial{\Sigma}$
as an argument to the DCPO endo-functor $\interp{\sigMonomialFam{\Sigma}(a)}$.
To guarantee that each operation $\app{a}$ is continuous, we have the rules
\nameref{rule:leq-app-cont-upperbound} and \nameref{rule:leq-app-cont-lowerbound-of-upperbounds}.
Combining these rules we get
\begin{gather*}
    \inferrule
        {a : \sigConstrNames{\Sigma} \and \alpha : I \to \interp{\sigMonomialFam{\Sigma}(a)}(\Initial{\Sigma})
            \and \delta : \isDirected{\alpha}}
        {\isLeastUpperbound(\app{a}(\dcpoSup{\interp{\sigMonomialFam{\Sigma}(a)}(\Initial{\Sigma})} \alpha), {\app{a}} \circ \alpha)}
\end{gather*}
This is exactly what it means for $\app{a}$ to be continuous.

\begin{lemma}\label{lemma:initial-prealg-structure}
    The type $\Initial{\Sigma}$ has a prealgebra structure for the signature $\Sigma$.
\end{lemma}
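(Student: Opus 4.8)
The plan is to take the DCPO structure on $\Initial{\Sigma}$ provided by Lemma~\ref{lemma:initial-dcpo-structure} as the underlying DCPO, and to use the point constructor $\app{a}$ as the operation for each constructor name. First I would observe that, for each $a : \sigConstrNames{\Sigma}$, the type $\interp{\sigMonomialFam{\Sigma}(a)}(\Initial{\Sigma}) = (\sigMonomialArity{\Sigma}(a) \to \Initial{\Sigma}) \times \sigMonomialConst{\Sigma}(a)$ is itself a DCPO: it is obtained from the DCPO $\Initial{\Sigma}$ by forming the exponential with $\sigMonomialArity{\Sigma}(a)$ and the product with the constant DCPO $\sigMonomialConst{\Sigma}(a)$, both standard DCPO constructions whose suprema are computed componentwise. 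As already noted after Lemma~\ref{lemma:initial-dcpo-structure}, this is exactly what lets us apply the endofunctor $\interp{\sigMonomialFam{\Sigma}(a)}$ to $\Initial{\Sigma}$. Hence $\app{a}$ is a genuine map $\interp{\sigMonomialFam{\Sigma}(a)}(\Initial{\Sigma}) \to \Initial{\Sigma}$, and I would set $\interpretation{\Initial{\Sigma}}{a} := \app{a}$.

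It then remains to check that each $\app{a}$ is Scott continuous, in the sense of Definition~\ref{def:prealgebra} (that is, the notion of continuity from~\cite{dejong:2021} used throughout): for every directed family $\alpha : I \to \interp{\sigMonomialFam{\Sigma}(a)}(\Initial{\Sigma})$, the element $\app{a}(\dcpoSup{\interp{\sigMonomialFam{\Sigma}(a)}(\Initial{\Sigma})} \alpha)$ is the least upper bound of $\app{a} \circ \alpha$. This is supplied directly by the two continuity constructors: \nameref{rule:leq-app-cont-upperbound} states that $\app{a}(\dcpoSup{} \alpha)$ is an upper bound of $\app{a} \circ \alpha$, while \nameref{rule:leq-app-cont-lowerbound-of-upperbounds} states that it lies below every other upper bound. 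Combining them yields $\isLeastUpperbound(\app{a}(\dcpoSup{} \alpha), \app{a} \circ \alpha)$, which is precisely continuity; in particular, no separate monotonicity argument is required, since monotonicity follows from this formulation of continuity.

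Putting the pieces together, $\Initial{\Sigma}$ together with the family $(\app{a})_{a : \sigConstrNames{\Sigma}}$ of Scott continuous operations is a prealgebra for $\Sigma$ in the sense of Definition~\ref{def:prealgebra}. There is essentially no hard step here: every ingredient is handed to us by a constructor of the QIIT, and the only content is recognizing that \nameref{rule:leq-app-cont-upperbound} and \nameref{rule:leq-app-cont-lowerbound-of-upperbounds} together express exactly the least-upper-bound condition demanded by Scott continuity. If anything counts as the main obstacle, it is this bookkeeping identification of the two inference rules with the continuity predicate, together with confirming that the domain $\interp{\sigMonomialFam{\Sigma}(a)}(\Initial{\Sigma})$ is a DCPO so that the statement even typechecks.
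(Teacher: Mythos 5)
Your proposal is correct and matches the paper's own argument: the paper likewise equips $\Initial{\Sigma}$ with the DCPO structure from Lemma~\ref{lemma:initial-dcpo-structure}, takes $\app{a}$ as the operation for each constructor name (noting that the DCPO structure is what allows applying the endofunctor $\interp{\sigMonomialFam{\Sigma}(a)}$), and observes that \nameref{rule:leq-app-cont-upperbound} and \nameref{rule:leq-app-cont-lowerbound-of-upperbounds} combine to give exactly the least-upper-bound formulation of Scott continuity. No discrepancies to report.
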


Finally, we have one single constructor for the algebra structure for the signature $\Sigma$ given in Figure~\ref{fig:initial-prealg-constructors}.
The rule \nameref{rule:leq-ineq-valid} guarantees that all the formal inequalities of the signature are valid.
Note that, by Lemma~\ref{lemma:initial-prealg-structure}, we have an algebra structure for $\sigPre{\Sigma}$,
so we can indeed interpret the left- and right-hand side of the formal inequality in $\Initial{\Sigma}$.

\begin{theorem}\label{theorem:initial-prealg-structure}
    The type $\Initial{\Sigma}$ has an algebra structure for the signature $\Sigma$.
\end{theorem}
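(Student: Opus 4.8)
The plan is to show that $\Initial{\Sigma}$, which already carries a prealgebra structure for $\Sigma$ by Lemma~\ref{lemma:initial-prealg-structure}, additionally validates every formal inequality of the signature, thereby upgrading it to an algebra. By the definition of algebra, it suffices to prove that for each inequality name $j : \sigIneqNames{\Sigma}$ we have $\Initial{\Sigma} \vDash \sigIneqFam{\Sigma}(j)$. Unfolding the validity condition from Definition~\ref{def:ineq-validity}, this means: writing $\sigIneqFam{\Sigma}(j)$ as a formal inequality $\formalIneqVar{t_1}{V}{t_2}$ with variables $V : \Universe$, we must exhibit, for every variable assignment $\varAssignment : V \to \Initial{\Sigma}$, a proof that $\interp{t_1}_\varAssignment \Leq{\Sigma} \interp{t_2}_\varAssignment$ holds in the underlying DCPO of $\Initial{\Sigma}$.

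The key observation is that this is exactly what the constructor \nameref{rule:leq-ineq-valid} from Figure~\ref{fig:initial-alg-constructors} provides. First I would fix an arbitrary $j : \sigIneqNames{\Sigma}$ and an arbitrary $\varAssignment : V \to \Initial{\Sigma}$. Then I would invoke \nameref{rule:leq-ineq-valid}, supplying $j$, the witness that $\sigIneqFam{\Sigma}(j)$ decomposes as $\formalIneqVar{t_1}{V}{t_2}$, and the assignment $\varAssignment$; its conclusion is precisely the inequality $\interp{t_1}_\varAssignment \Leq{\Sigma} \interp{t_2}_\varAssignment$ that validity requires. Since $j$ and $\varAssignment$ were arbitrary, this establishes $\Initial{\Sigma} \vDash \sigIneqFam{\Sigma}(j)$ for every $j$, which is the defining condition for an algebra structure.

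There is one subtlety worth addressing carefully, namely that the interpretations $\interp{t_1}_\varAssignment$ and $\interp{t_2}_\varAssignment$ appearing in the conclusion of \nameref{rule:leq-ineq-valid} are computed using the prealgebra structure on $\Initial{\Sigma}$, so I would note that these interpretations are well-defined exactly because Lemma~\ref{lemma:initial-prealg-structure} has already equipped $\Initial{\Sigma}$ with interpretations $\app{a}$ for each operation name. This is the content of the remark preceding the theorem: the interpretation function of Definition~\ref{def:terms-interpretation} is only available once we have a prealgebra, and the recursion defining $\interp{\blank}_\varAssignment$ proceeds through the $\app{a}$ operations supplied by that lemma. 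Thus the validity statement is not circular; it refers to the interpretation induced by the already-constructed prealgebra structure.

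I expect no genuine obstacle here, since the theorem is essentially a repackaging of a single QIIT constructor together with the prior lemmas. The only care needed is bookkeeping: ensuring that the term interpretations invoked by \nameref{rule:leq-ineq-valid} coincide definitionally with those fixed by the prealgebra structure of Lemma~\ref{lemma:initial-prealg-structure}, so that the constructor's output type matches the validity requirement on the nose. Once this alignment is observed, the proof is complete by directly applying the constructor, and $\Initial{\Sigma}$ inherits a full algebra structure for $\Sigma$.
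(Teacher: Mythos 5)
Your proposal is correct and matches the paper's own argument: the paper likewise obtains the algebra structure by combining Lemma~\ref{lemma:initial-prealg-structure} (which makes the term interpretations well-defined) with a direct application of the \textsc{Ineq-Valid} constructor, whose conclusion is exactly the validity condition of Definition~\ref{def:ineq-validity} for each inequality name and variable assignment. Your added remark about non-circularity of the interpretations is precisely the point the paper makes in the sentence preceding the theorem.
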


\subsection{Initiality}\label{sec:initial-algebra-initiality}

To show that $\Initial{\Sigma}$ is initial in $\Alg{\Sigma}$,
we need to construct a unique morphism $\Initial{\Sigma} \to X$ for arbitrary $X$.
The existence of such a morphism comes from the recursion principle for $\Initial{\Sigma}$.

\begin{theorem}
    Let $\Sigma$ be a signature and $X$ an algebra for $\Sigma$.
    Then there exists a morphism ${\rec_{\Sigma, X} : \Initial{\Sigma} \to X}$,
    with the following computation rules%
    \footnote{Here, $\doteq$ means definitional equality.}
    \begin{align*}
        \rec_{\Sigma, X}(\app{a}(x)) &\doteq \interpretation{X}{a}(\interp{\sigMonomialFam{\Sigma}(a)}(\rec_{\Sigma, X})(x)) \\
        \rec_{\Sigma, X}(\initialLub{\Sigma} \alpha) &\doteq \dcpoSup{X} (\rec_{\Sigma,X} \circ \alpha)
    \end{align*}
\end{theorem}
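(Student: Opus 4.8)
The plan is to obtain $\rec_{\Sigma, X}$ directly from the (non-dependent) recursion principle of the QIIT $\Initial{\Sigma}$. Since $\Initial{\Sigma}$ and $\Leq{\Sigma}$ are defined simultaneously, this principle asks us to supply a target \emph{set} $C$ together with a proposition-valued relation $D : C \to C \to \Universe$, and then an interpretation in $(C, D)$ of every point and relation constructor from Figures~\ref{fig:initial-dcpo-constructors}, \ref{fig:initial-prealg-constructors} and \ref{fig:initial-alg-constructors}. I would take $C$ to be the carrier of the algebra $X$ and $D$ to be its order $\sqsubseteq$; both requirements are met because the carrier of a DCPO is a set and its order is prop-valued. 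The principle then yields a map $f : \Initial{\Sigma} \to X$ together with a monotonicity witness $g$, i.e. a proof that $x \Leq{\Sigma} y$ implies $f(x) \sqsubseteq f(y)$, and we define $\rec_{\Sigma, X}$ to be $f$.

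The interpretation of the constructors is read off from the structure of $X$ in three layers, mirroring the three figures. For the point constructors, the supremum constructor $\initialLub{\Sigma}$ is interpreted by the supremum $\dcpoSup{X}$ of the DCPO $X$, and each operation constructor $\app{a}$ is interpreted by the Scott continuous operation $\interpretation{X}{a}$ supplied by the prealgebra structure of $X$; these two assignments are exactly the two computation rules in the statement. For the relation constructors of Figure~\ref{fig:initial-dcpo-constructors}, \nameref{rule:leq-refl}, \nameref{rule:leq-trans} and \nameref{rule:leq-prop} are discharged by reflexivity, transitivity and prop-valuedness of $\sqsubseteq$, while \nameref{rule:leq-lub-is-upperbound} and \nameref{rule:leq-lub-is-lowerbound-of-upperbounds} hold because $\dcpoSup{X}$ is a least upper bound. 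The two path constructors \nameref{rule:leq-antisym} and \nameref{rule:initial-set} follow from antisymmetry of $\sqsubseteq$ and from the carrier of $X$ being a set. The continuity rules \nameref{rule:leq-app-cont-upperbound} and \nameref{rule:leq-app-cont-lowerbound-of-upperbounds} are precisely the statement that each $\interpretation{X}{a}$ is Scott continuous, which is part of the prealgebra data. Finally, \nameref{rule:leq-ineq-valid} is discharged by the hypothesis that $X$ is an \emph{algebra}: for each $j : \sigIneqNames{\Sigma}$ we have $X \vDash \sigIneqFam{\Sigma}(j)$, so the interpreted left-hand side lies below the interpreted right-hand side for every variable assignment.

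The step requiring the most care is the interpretation of $\initialLub{\Sigma}$. To send a directed family $\alpha : I \to \Initial{\Sigma}$ to $\dcpoSup{X}(f \circ \alpha)$ we must first know that $f \circ \alpha$ is directed in $X$; this is where the simultaneous nature of the QIIT is essential, since directedness of $f \circ \alpha$ is obtained from directedness of $\alpha$ by transporting each witness $\alpha(i), \alpha(j) \Leq{\Sigma} \alpha(k)$ along the monotonicity proof $g$. One must also check that this choice of supremum is coherent with the images of \nameref{rule:leq-lub-is-upperbound} and \nameref{rule:leq-lub-is-lowerbound-of-upperbounds}, and similarly that the continuity data for $\app{a}$ is compatible with the functorial action $\interp{\sigMonomialFam{\Sigma}(a)}(f)$; in the formalization these coherences, together with the bookkeeping of universe levels, are the bulk of the work. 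Once the recursion principle applies, it returns $\rec_{\Sigma, X}$ satisfying the two displayed computation rules definitionally. The second rule shows that $\rec_{\Sigma, X}$ preserves directed suprema, hence is Scott continuous, and the first rule shows it commutes with every $\interpretation{X}{a}$; together with the monotonicity witness $g$ this proves that $\rec_{\Sigma, X}$ is an algebra morphism, as required.
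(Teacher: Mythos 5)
Your proposal is correct and follows essentially the same route as the paper: both obtain $\rec_{\Sigma,X}$ from the QIIT's recursion principle by interpreting every point, relation, and path constructor via the DCPO/prealgebra/algebra structure of $X$, and both identify the key subtlety that interpreting $\initialLub{\Sigma}$ requires the simultaneously-produced monotonicity witness to see that the image family is directed. Your write-up is simply a more detailed account of the paper's (much terser) argument.
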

\begin{proof}
    The two remaining constructors of $\Initial{\Sigma}$, \nameref{rule:leq-antisym} and \nameref{rule:initial-set},
    get send to the proofs that $X$ has an antisymmetric relation and is a set, respectively.
    The function $\rec_{\Sigma,X}$ is well-defined, as it is monotone.%
    \footnote{In Agda, we annotate the recursion principle with a \texttt{TERMINATING} pragma,
        as the termination checker believes there are non-structural recursive calls in the monotonicity proof.
        We think that the termination checker is too restrictive here,
        and that the recursion principle should be accepted.
        The problem seems to be that Agda does not unfold definitions enough to recognize that the recursive calls are correct.
        We explain this in more detail in the formalization (\texttt{src/SignatureAlgebra/Initial/Elimination.agda}).}
    It is an algebra morphism, as by definition, it is continuous and commutes with all the operations.
\end{proof}

To prove the uniqueness of such a morphism, we need an induction principle for $\Initial{\Sigma}$.
For this, we could use \emph{displayed algebras}~\cite{DBLP:conf/popl/Sojakova15,DBLP:conf/rta/KaposiK18}
in its general form, but rather, in this paper, we only consider our specific use case.
First, let us define how a monomial acts on a family of types.
\begin{definition}
    Let $M$ be a monomial with arity $B$ and constant DCPO $C$,
    $X$ a DCPO, and $Y : X \to \Universe$ a family of types.
    The family $Y$ lifts to a type family $\monomialAction{M}(Y) : \interp{M}(X) \to \Universe$, which is defined as
    \begin{gather*}
        \monomialAction{M}(Y)(f, c) = \left(\prod_{b : B} Y(f(b))\right) \times C
    \end{gather*}
\end{definition}
The action defined above, allows us to state the induction hypotheses for $\Initial{\Sigma}$.
If we restrict our view to a predicate $Y : X \to \Universe$ and let $x : \interp{M}(X)$,
then $\monomialAction{M}(Y)(x)$ expresses the fact that $Y$ holds for all the subterms of type $X$ in $x$.
This is exactly what we need as the induction hypothesis in the case of $\app{a}(x) : \Initial{\Sigma}$.
\begin{theorem}\label{thm:initial-alg-ind}
    Let $Y : \Initial{\Sigma} \to \Universe$ be a family of types, such that
    \begin{itemize}
        \item each $Y(x)$ is a proposition;
        \item $Y$ is \emph{supremum preserving}: if $\alpha : I \to \Initial{\Sigma}$
              is a directed family, such that $Y(\alpha(i))$ holds for every $i : I$, then $Y(\initialLub{\Sigma}(\alpha))$;
        \item $Y$ is \emph{app preserving}: if $a$ is a constructor name,
              $x : \interp{\sigMonomialFam{\Sigma}(a)}(\Initial{\Sigma})$ and $\monomialAction{\sigMonomialFam{\Sigma}(a)}(Y)(x)$,
              then $Y(\app{a}(x))$ holds.
    \end{itemize}
    In that case $Y(x)$ holds for every $x : \Initial{\Sigma}$, that is,
    we have a function $\ind_{\Sigma,Y} : \prod_{x : \Initial{\Sigma}} Y(x)$.
\end{theorem}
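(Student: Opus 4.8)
The plan is to instantiate the general induction (elimination) principle that comes with the QIIT simultaneously defining $\Initial{\Sigma}$ and its relation $\Leq{\Sigma}$. Since the goal is to build a dependent function into a family $Y$ over $\Initial{\Sigma}$ only, and not into any family over the relation, I would take the motive over $\Initial{\Sigma}$ to be $Y$ itself and the motive over $\Leq{\Sigma}$ to be the trivial, constantly-$\Unit$ family. With this choice, each constructor of $\Leq{\Sigma}$ — namely \nameref{rule:leq-refl}, \nameref{rule:leq-trans}, \nameref{rule:leq-prop}, the supremum rules \nameref{rule:leq-lub-is-upperbound} and \nameref{rule:leq-lub-is-lowerbound-of-upperbounds}, the continuity rules \nameref{rule:leq-app-cont-upperbound} and \nameref{rule:leq-app-cont-lowerbound-of-upperbounds}, and \nameref{rule:leq-ineq-valid} — contributes a method valued in $\Unit$ and is therefore discharged trivially.

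It then remains to supply methods for the constructors of $\Initial{\Sigma}$. For the point constructors $\app{a}$ (one for each $a : \sigConstrNames{\Sigma}$) and $\initialLub{\Sigma}$, this is precisely what the hypotheses provide. The case for $\app{a}$ is handled by the \emph{app preserving} assumption: when $\app{a}$ is applied to $x : \interp{\sigMonomialFam{\Sigma}(a)}(\Initial{\Sigma})$, the induction hypotheses on its subterms are collected exactly into $\monomialAction{\sigMonomialFam{\Sigma}(a)}(Y)(x)$, from which the hypothesis yields $Y(\app{a}(x))$. The case for $\initialLub{\Sigma}$ is handled by the \emph{supremum preserving} assumption; the directedness witness $\delta$ carried by $\initialLub{\Sigma}$ lives in $\Leq{\Sigma}$, whose motive is trivial, so it imposes no obligation beyond the induction hypotheses $Y(\alpha(i))$ on the family members.

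The only remaining constructors are the path constructors \nameref{rule:leq-antisym} and \nameref{rule:initial-set}, and here the hypothesis that every $Y(x)$ is a proposition does the essential work. For \nameref{rule:leq-antisym}, the induction principle requires that the proof of $Y(x)$ lie over the proof of $Y(y)$ along the induced path $x = y$; since $Y(y)$ is a proposition, this dependent equality holds automatically. For \nameref{rule:initial-set}, the associated coherence is a condition one dimension higher, and it is satisfied vacuously because a family of propositions is in particular a family of sets. I expect this verification of the path-constructor coherences to be the main obstacle: eliminating such a QIIT into an arbitrary family would require genuinely constructing these higher cells, and it is exactly the restriction to propositional $Y$ that collapses them to triviality. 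Assembling the point-constructor methods with these trivially satisfied path- and relation-constructor methods yields the eliminator, which is the desired function $\ind_{\Sigma,Y} : \prod_{x : \Initial{\Sigma}} Y(x)$.
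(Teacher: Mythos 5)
Your proposal is correct and follows essentially the same route as the paper's proof: define $\ind_{\Sigma,Y}$ by the QIIT's elimination principle, discharge the point constructors $\app{a}$ and $\initialLub{\Sigma}$ via the app-preserving and supremum-preserving hypotheses, and discharge the path constructors (and the relation part) using that $Y$ is proposition-valued. Your extra detail about taking the trivial motive over $\Leq{\Sigma}$ and about why the coherences for \nameref{rule:leq-antisym} and \nameref{rule:initial-set} collapse is exactly what the paper leaves implicit.
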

\begin{proof}
    The function $\ind_{\Sigma,Y}$ is defined by pattern matching.
    For the $\initialLub{\Sigma}$ and $\app{a}$ constructors, we use that fact that $Y$ is supremum and app preserving.
    On the path constructors of $\Initial{\Sigma}$, $\ind_{\Sigma,Y}$ is defined using the fact that $Y$ is a proposition valued.
\end{proof}

With these elimination principles at hand, we show the initiality of $\Initial{\Sigma}$.
\begin{theorem}
  \label{thm:initial-alg}
    Let $\Sigma$ be a signature.
    The algebra $\Initial{\Sigma}$ is initial in the category $\Alg{\Sigma}$.
\end{theorem}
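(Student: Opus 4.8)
The plan is to establish initiality by constructing the unique algebra morphism from $\Initial{\Sigma}$ to any algebra $X$ and showing its uniqueness. The existence half is already done: the recursion principle gives us $\rec_{\Sigma, X} : \Initial{\Sigma} \to X$, which the preceding theorem confirms is an algebra morphism (it is Scott continuous by construction and commutes with every operation $\interpretation{X}{a}$ via the computation rules). So the real content of this theorem is the uniqueness of this morphism, and this is where the induction principle (Theorem~\ref{thm:initial-alg-ind}) does the work.

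First I would fix an arbitrary algebra $X$ and suppose $g : \Initial{\Sigma} \to X$ is any algebra morphism. The goal is to show $g = \rec_{\Sigma, X}$, which by function extensionality reduces to showing $g(x) = \rec_{\Sigma, X}(x)$ for all $x : \Initial{\Sigma}$. I would apply the induction principle to the type family $Y(x) \coloncolonequals (g(x) = \rec_{\Sigma, X}(x))$. The crucial observation is that this $Y$ satisfies the three hypotheses of Theorem~\ref{thm:initial-alg-ind}: each $Y(x)$ is a proposition because $X$ is a set (its underlying DCPO has a set as carrier, guaranteed by antisymmetry), so equalities in $X$ form propositions. For the app-preserving case, given a constructor name $a$ and $x : \interp{\sigMonomialFam{\Sigma}(a)}(\Initial{\Sigma})$ together with the induction hypothesis $\monomialAction{\sigMonomialFam{\Sigma}(a)}(Y)(x)$ — which states precisely that $g$ and $\rec_{\Sigma, X}$ agree on all the $\Initial{\Sigma}$-subterms of $x$ — I would use that both $g$ and $\rec_{\Sigma, X}$ commute with $\app{a}$: both send $\app{a}(x)$ to $\interpretation{X}{a}$ applied to the functorial image, and these agree once the components agree. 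For the supremum-preserving case, both morphisms are Scott continuous and hence preserve directed suprema, so agreement on a directed family $\alpha$ transfers to agreement on $\initialLub{\Sigma}(\alpha)$.

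The main subtlety, rather than obstacle, is ensuring that the functorial action $\interp{\sigMonomialFam{\Sigma}(a)}(g) = \interp{\sigMonomialFam{\Sigma}(a)}(\rec_{\Sigma, X})$ on the argument $x$ follows cleanly from the pointwise induction hypothesis. Here $\monomialAction{\sigMonomialFam{\Sigma}(a)}(Y)(x)$ unfolds to a product of proofs $g(f(b)) = \rec_{\Sigma, X}(f(b))$ over all $b : \sigMonomialArity{\Sigma}(a)$, together with the constant component of $\sigMonomialConst{\Sigma}(a)$ on which both functors act as the identity. Applying function extensionality to the first component yields equality of the two functorial images, and then the morphism equations for $g$ and for $\rec_{\Sigma, X}$ close the case. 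I would carry out the three induction cases in exactly this order: the proposition check first (to justify the restriction of the induction principle to predicates), then the app case, then the supremum case. Since all three hypotheses hold, Theorem~\ref{thm:initial-alg-ind} delivers $\ind_{\Sigma, Y} : \prod_{x : \Initial{\Sigma}} Y(x)$, i.e.\ $g = \rec_{\Sigma, X}$, establishing uniqueness and hence initiality.
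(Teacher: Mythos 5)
Your proposal is correct and follows essentially the same route as the paper's own proof: existence via the recursion principle, and uniqueness by applying the induction principle (Theorem~\ref{thm:initial-alg-ind}) to the pointwise-equality predicate, using that $X$ is a set for the proposition condition and the morphism/computation rules for the app and supremum cases. The only difference is presentational detail, such as your explicit remark on deriving equality of the functorial images from the pointwise induction hypothesis via function extensionality.
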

\begin{proof}
    Let $X$ be an arbitrary algebra for $\Sigma$.
    The unique algebra morphism $\bang : \Initial{\Sigma} \to X$ is given by the recursion principle.

    Now let $f : \Initial{\Sigma} \to X$ be an arbitrary map of algebras.
    To show that $\bang$ and $f$ are equal, it is enough to show that their underlying functions are equal.
    After applying function extensionality, it thus suffices to show the following for each $x : \Initial{\Sigma}$
    \begin{gather}\label{eqn:initiality-ind-statement}
        \bang(x) = f(x)
    \end{gather}
    As the underlying type $X$ is a set, we know that \eqref{eqn:initiality-ind-statement} is a proposition.
    We can thus use the induction principle of $\Initial{\Sigma}$.
    First we prove that \eqref{eqn:initiality-ind-statement} is true for the supremum of a directed family.
    Let $\alpha : I \to \Initial{\Sigma}$ be a directed family.
    By the computation rules of $\bang$ and the continuity of $f$,
    we need to show that the suprema in $X$ of $\bang \circ \alpha$ and $f \circ \alpha$ are equal.
    This holds, because the families $\bang \circ \alpha$ and $f \circ \alpha$ are equal by the induction hypothesis.

    Finally, we need to show that \eqref{eqn:initiality-ind-statement} is true for any $\app{a}(x)$.
    By the computation rules of $\bang$ and the fact that $f$ commutes with all operations,
    we have to show that
    \begin{gather*}
        \interpretation{X}{a}(\interp{\sigMonomialFam{\Sigma}(a)}(\bang)(x)) = \interpretation{X}{a}(\interp{\sigMonomialFam{\Sigma}(a)}(f)(x))
    \end{gather*}
    This follows from the induction hypothesis, as \eqref{eqn:initiality-ind-statement} holds for all the subterms of $x$.
\end{proof}

\section{Examples}\label{sec:examples}
Finally, we show various examples of signatures and their initial algebras in this section
using Theorem \ref{thm:initial-alg}.
We start with several categorical constructions of DCPOs (coalesced sums, smash products, and coequalizers),
and then we show examples of free DCPOs (partiality and powerdomains).

\subsection{Coalesced Sum}
Recall that a pointed DCPO is a DCPO with a least element.
The categorical coproduct of pointed DCPOs is defined as their coalesced sum.
Classically, one defines this by taking the union of the two pointed DCPOs,
removing both their bottom elements, and adding a new bottom element.
We define this DCPO as the initial algebra for a signature.

\begin{example}
  Let $D, E$ be pointed DCPOs. We define $\csumSig{D}{E}$.
  We add two operations, with constructor names $\sumInclName{D}, \sumInclName{E}$
  and corresponding monomials $(\bot \to X) \times D$ and $(\bot \to X) \times E$.
  This allows us to write down terms of the form $\sumIncl{D}(d), \sumIncl{E}(e)$ for $d : D, e : E$.
  Finally, we add the following two inequalities.
  \[
  \formalIneq{\sumIncl{D}(\bot_D)}{\sumIncl{E}(\bot_E)} \quad \quad
  \formalIneq{\sumIncl{E}(\bot_E)}{\sumIncl{D}(\bot_D)}
  \]
\end{example}

An algebra for $\csumSig{D}{E}$ consists of a DCPO $X$,
together with inclusion operations $\sumIncl{D} : D \to X$ and $\sumIncl{E} : E \to X$,
such that $\sumIncl{D}(\bot_D) = \sumIncl{E}(\bot_E)$.

\begin{proposition}
    Let $D, E$ be pointed DCPOs. Their coproduct is given by the initial algebra for $\csumSig{D}{E}$.
\end{proposition}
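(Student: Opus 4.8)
The plan is to verify directly that $S := \Initial{\csumSig{D}{E}}$ carries the universal property of the coproduct in the category $\pointedDCPO$ of pointed DCPOs and strict continuous maps, using the initiality from Theorem~\ref{thm:initial-alg}. First I would record that the two inequalities of $\csumSig{D}{E}$ together with antisymmetry (\nameref{rule:leq-antisym}) force $\sumIncl{D}(\bot_D)$ and $\sumIncl{E}(\bot_E)$ to coincide; write $b$ for this common value. I claim $b$ is the least element of $S$, so that $S$ is a pointed DCPO. To prove this I would apply the induction principle (Theorem~\ref{thm:initial-alg-ind}) to the family $Y(x) := (b \Leq{} x)$, which is proposition-valued by \nameref{rule:leq-prop}. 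The supremum-preserving case is immediate, since each $\alpha(i)$ lies below $\initialLub{\Sigma}\alpha$. In the app-preserving case one splits on the two constructor names; since each monomial has empty arity, the induction hypothesis is vacuous and one only needs $b = \sumIncl{D}(\bot_D) \Leq{} \sumIncl{D}(c)$ (and the analogue for $E$), which follows from $\bot_D \Leq{} c$ in $D$ and the monotonicity of the continuous operation $\sumIncl{D} : D \to S$.

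Next I would exhibit the injections. Set $\mathrm{in}_D := \sumIncl{D} : D \to S$ and $\mathrm{in}_E := \sumIncl{E} : E \to S$; these are Scott continuous as part of the prealgebra structure (Lemma~\ref{lemma:initial-prealg-structure}), and they are strict because $\mathrm{in}_D(\bot_D) = b = \bot_S$, and likewise for $E$. For the universal property, let $Z$ be a pointed DCPO equipped with strict continuous maps $g : D \to Z$ and $h : E \to Z$. I would turn $Z$ into an algebra for $\csumSig{D}{E}$ by interpreting the two operations as $g$ and $h$. The two formal inequalities then hold because strictness gives $g(\bot_D) = \bot_Z = h(\bot_E)$, so both sides of each inequality interpret to equal elements and the inequalities hold by reflexivity of the order on $Z$. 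Theorem~\ref{thm:initial-alg} yields a unique algebra morphism $\bang : S \to Z$. By the definition of algebra morphism it is continuous and satisfies $\bang \circ \mathrm{in}_D = g$ and $\bang \circ \mathrm{in}_E = h$; moreover it is strict, since $\bang(b) = \bang(\sumIncl{D}(\bot_D)) = g(\bot_D) = \bot_Z$.

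For uniqueness I would observe that any strict continuous $u : S \to Z$ with $u \circ \mathrm{in}_D = g$ and $u \circ \mathrm{in}_E = h$ is automatically an algebra morphism, because commuting with $\sumIncl{D}$ and $\sumIncl{E}$ is precisely commuting with the two (and only) operations of the signature; hence $u = \bang$ by the uniqueness part of Theorem~\ref{thm:initial-alg}. Assembling these facts gives a bijection $\mathrm{Hom}_{\pointedDCPO}(S, Z) \cong \mathrm{Hom}_{\pointedDCPO}(D, Z) \times \mathrm{Hom}_{\pointedDCPO}(E, Z)$, natural in $Z$, which is exactly the universal property of the coproduct.

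The main obstacle I anticipate is the mismatch between the two notions of morphism: algebra morphisms are arbitrary Scott continuous maps commuting with the operations, whereas coproducts in $\pointedDCPO$ are taken with respect to \emph{strict} maps. Reconciling these is the crux, and it is resolved by the two observations above, namely that $S$ is genuinely pointed with bottom $b$, and that once the target algebra structure is induced by strict maps the mediating algebra morphism is forced to be strict. The remaining ingredients, the induction establishing pointedness and the empty-arity simplification of the app-preserving case, are routine.
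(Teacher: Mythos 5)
Your proof is correct and follows essentially the same route as the paper's (very terse) proof: take $\sumIncl{D}(\bot_D) = \sumIncl{E}(\bot_E)$ as the bottom element, establish its leastness via the induction principle of Theorem~\ref{thm:initial-alg-ind}, and derive the universal property from initiality (Theorem~\ref{thm:initial-alg}) by equipping a cocone target with the induced algebra structure. You fill in the details the paper omits, including the correct handling of strict versus arbitrary continuous morphisms, and these details are all sound.
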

\begin{proof}
    By taking $\sumIncl{D}(\bot_D) = \sumIncl{E}(\bot_E)$ as the bottom element and using Theorem~\ref{thm:initial-alg-ind}.
\end{proof}

\subsection{Coequalizer}
Let $D, E$ be DCPOs and $f, g : D \to E$ continuous maps.
We define the coequalizer by first defining a signature whose algebras correspond to coequalizer cocones,
and then considering the initial algebra for this signature.

\begin{example}
    Let $D,E$ be DCPOs and $f, g : D \to E$. We define $\coeqSig{f}{g}$.
    It has a single operation which includes $E$, i.e. it is represented by the monomial $(\bot \to X) \times E$.
    This allows us to write terms of the form $\coeqIncl(e)$ for $e : E$.
    For each $d : D$, we add the following two inequalities.
    \[
      \formalIneq{\coeqIncl(f(d))}{\coeqIncl(g(d))} \quad \quad
      \formalIneq{\coeqIncl(g(d))}{\coeqIncl(f(d))}
    \]
\end{example}

An algebra for $\coeqSig{f}{g}$ is a DCPO $X$,
together with a map $\coeqIncl : E \to X$,
such that $\coeqIncl(f(d)) = \coeqIncl(g(d))$ for each $d : D$.

\begin{proposition}
  Let $D,E$ be DCPOs and $f, g : D \to E$.
  Algebras for $\coeqSig{f}{g}$ correspond to coequalizer cocones of $f$ and $g$.
  Furthermore, the initial algebra for $\coeqSig{f}{g}$ is the coequalizer of $f$ and $g$.
\end{proposition}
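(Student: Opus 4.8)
The plan is to prove both statements by unfolding what an algebra for $\coeqSig{f}{g}$ is and matching it, on objects and on morphisms, with the category of coequalizer cocones of $f$ and $g$; initiality then transports along this match and Theorem~\ref{thm:initial-alg} finishes the job.

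First I would unfold Definition~\ref{def:prealgebra} for $\coeqSig{f}{g}$. Its single monomial $M$ has arity $\bot$ and constant $E$, so $\interp{M}(X) = (\bot \to X) \times E$. Since $\bot \to X$ is the terminal DCPO, there is a canonical DCPO isomorphism $(\bot \to X) \times E \cong E$, under which the operation $\interpretation{X}{\coeqIncl} : \interp{M}(X) \to X$ corresponds to a single Scott continuous map $\coeqIncl : E \to X$; continuity matches on both sides because the isomorphism is an order-isomorphism that preserves directed suprema. Thus a prealgebra is exactly a DCPO $X$ together with a continuous map $\coeqIncl : E \to X$. Next I would handle the inequalities: for each $d : D$ the two formal inequalities $\formalIneq{\coeqIncl(f(d))}{\coeqIncl(g(d))}$ and $\formalIneq{\coeqIncl(g(d))}{\coeqIncl(f(d))}$ are ground terms (no variable occurs), so by Definition~\ref{def:ineq-validity} their validity for every assignment collapses to $\coeqIncl(f(d)) \Leq{} \coeqIncl(g(d))$ and $\coeqIncl(g(d)) \Leq{} \coeqIncl(f(d))$ in the underlying DCPO. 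By antisymmetry these hold for all $d$ iff $\coeqIncl(f(d)) = \coeqIncl(g(d))$ for all $d$, i.e., by function extensionality, $\coeqIncl \circ f = \coeqIncl \circ g$. Hence algebras for $\coeqSig{f}{g}$ are precisely triples $(X, \coeqIncl, \coeqIncl \circ f = \coeqIncl \circ g)$, which are exactly coequalizer cocones of $f$ and $g$.

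To upgrade the object correspondence to an equivalence of categories I would check that the morphism conditions agree. For a candidate morphism $k : X \to Y$, the algebra coherence $\interpretation{Y}{\coeqIncl} \circ \interp{M}(k) = k \circ \interpretation{X}{\coeqIncl}$ of Definition~\ref{def:prealgebra-mor} becomes, under the isomorphism identifying $\interp{M}(\blank)$ with $E$ and on which $\interp{M}(k)$ acts as the identity on the $E$-component, simply $\coeqIncl_Y = k \circ \coeqIncl_X$, which is the defining equation of a morphism of cocones. This yields an equivalence between $\Alg{\coeqSig{f}{g}}$ and the category of coequalizer cocones. For the second statement, initiality is preserved by this equivalence: by Theorem~\ref{thm:initial-alg} the algebra $\Initial{\coeqSig{f}{g}}$ is initial in $\Alg{\coeqSig{f}{g}}$, so its image is initial among coequalizer cocones; but an initial coequalizer cocone is, by definition, the coequalizer of $f$ and $g$.

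I expect the main obstacle to be the bookkeeping around the interpretation $\interp{\blank}_\varAssignment$ and the DCPO isomorphism $(\bot \to X) \times E \cong E$: one must verify carefully that validity over every variable assignment genuinely collapses to the single inequality because the terms are closed, and that the isomorphism is a bona fide isomorphism of DCPOs (preserving order and directed suprema) so that Scott continuity of the operation $\interpretation{X}{\coeqIncl}$ and of $\coeqIncl$ coincide and the algebra-morphism coherence reduces cleanly to the cocone equation. Everything else is routine transport of the universal property across the resulting equivalence.
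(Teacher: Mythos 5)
Your proposal is correct and follows exactly the route the paper intends: the paper omits an explicit proof, but its surrounding text performs the same unfolding (an algebra for $\coeqSig{f}{g}$ is a DCPO $X$ with a continuous $\coeqIncl : E \to X$ satisfying $\coeqIncl \circ f = \coeqIncl \circ g$), and initiality is then transported via Theorem~\ref{thm:initial-alg}. Your additional care about the isomorphism $(\bot \to X) \times E \cong E$, the collapse of validity for closed terms, and the morphism-level check is exactly the bookkeeping the paper leaves implicit.
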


\subsection{Smash Product}
Let $D,E$ be pointed DCPOs. Their smash product is defined to be the product $D \times E$,
where we identify $(\bot_D, e) = (d, \bot_E)$ for every $d : D, e : E$.
Concretely, we define the smash product as the initial algebra for a suitably chosen signature.

\begin{example}
  Let $D,E$ be pointed DCPOs. We define $\smashSig{D}{E}$.
  It has a single operation which includes $D \times E$, i.e. it is represented by the monomial $(\bot \to X) \times (D \times E)$.
  This allows us to write terms of the form $(d, e)$ for $d : D, e : E$.
  For each $d : D, e : E$, we add the following two inequalities.
  \[
    \formalIneq{(d, \bot_E)}{(\bot_D, e)} \quad \quad
    \formalIneq{(\bot_D, e)}{(d, \bot_E)}
  \]
\end{example}

An algebra for $\smashSig{D}{E}$ consists of a DCPO $X$,
together with an inclusion $\smashIncl : D \times E \to X$,
such that $\smashIncl(\bot_D, e) = \smashIncl(d, \bot_E)$ for every $d : D, e : E$.

\begin{definition}
    Let $D, E$ be pointed DCPOs. Their \textbf{smash product} is the initial algebra for $\smashSig{D}{E}$.
\end{definition}

\subsection{Free Algebra for a Signature}
\label{sec:initial-algebra-free}
The next two examples that we consider are special cases of a more general construction,
namely the free algebra,
which we discuss here.
More specifically, suppose that we have some signature $\Sigma$ and some DCPO $D$.
Our goal is to define a signature $\sigOver{\Sigma}{D}$ such that the initial algebra for $\sigOver{\Sigma}{D}$ gives the free $\Sigma$-algebra for $D$.
We use this construction to construct a left adjoint $F : \DCPO \to \Alg{\Sigma}$ for the forgetful functor $U : \Alg{\Sigma} \to \DCPO$.
The main idea is that $\sigOver{\Sigma}{D}$ is obtained by adding an operation $D \to UFD$ to $\Sigma$.

\begin{definition}
  Let $\Sigma$ be a signature and $D$ a DCPO.
  We define a signature $\sigOver{\Sigma}{D}$, by extending the constructor names of $\Sigma$ with a new element $\inclName$,
  and assign $(\bot \to X) \times D$ as the monomial for $\inclName$.
  We do not add any new inequalities, but note that we have to lift the original inequalities of $\Sigma$ as we changed the type of constructor names.
  We leave these technical details to the formalization.
\end{definition}

If $X : \Alg{\Sigma}$ is an algebra and $f : D \to U X$,
we construct an algebra for the signature $\sigOver{\Sigma}{D}$,
by using $f$ as the interpretation for the inclusion constructor name $\inclName$.
We write $\addIncl{X}{f} : \Alg{\sigOver{\Sigma}{D}}$ for this algebra.
Conversely, if $X : \Alg{\sigOver{\Sigma}{D}}$ is an algebra for $\sigOver{\Sigma}{D}$, we define the algebra $\forgetIncl{X} : \Alg{\Sigma}$,
by forgetting the interpretation of $\inclName$.
Now we define the free $\Sigma$-algebra for $D$ to be $\forgetIncl{(\Initial{\sigOver{\Sigma}{D}})}$.

\begin{theorem}
  \label{thm:free-alg}
  The operation $D \mapsto \forgetIncl{(\Initial{\sigOver{\Sigma}{D}})}$ lifts to a functor $F : \DCPO \to \Alg{\Sigma}$,
  which is left adjoint to the forgetful functor $U : \Alg{\Sigma} \to \DCPO$.
\end{theorem}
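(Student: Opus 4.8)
The plan is to produce, for every DCPO $D$, a universal arrow from $D$ to the forgetful functor $U$, and then to invoke the standard fact that a choice of universal arrow at each object determines a left adjoint.

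First I would turn the constructions $\addIncl{(-)}{-}$ and $\forgetIncl{(-)}$ into an isomorphism of categories. Unfolding the monomial $(\bot \to X) \times D$ assigned to $\inclName$ (and using that $\bot \to X$ is contractible), an interpretation of $\inclName$ in a DCPO $X$ is the same as a continuous map $D \to X$; since $\sigOver{\Sigma}{D}$ adds no new inequalities, an algebra for $\sigOver{\Sigma}{D}$ is exactly a $\Sigma$-algebra $X$ together with a continuous map $f : D \to U X$. Likewise a $\sigOver{\Sigma}{D}$-algebra morphism is precisely a $\Sigma$-algebra morphism that additionally commutes with the interpretations of $\inclName$. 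Hence $\addIncl{(-)}{-}$ and $\forgetIncl{(-)}$ witness an isomorphism $\Alg{\sigOver{\Sigma}{D}} \cong (D \downarrow U)$, where the comma category $(D \downarrow U)$ has as objects pairs $(X, f : D \to U X)$ and as morphisms $(X, f) \to (X', f')$ the $\Sigma$-algebra morphisms $h$ with $U h \circ f = f'$.

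Next I would transport initiality across this isomorphism. By Theorem~\ref{thm:initial-alg}, $\Initial{\sigOver{\Sigma}{D}}$ is initial in $\Alg{\sigOver{\Sigma}{D}}$, so its image is initial in $(D \downarrow U)$. Setting $F D := \forgetIncl{\Initial{\sigOver{\Sigma}{D}}}$ and letting $\eta_D : D \to U(F D)$ be the interpretation of $\inclName$, initiality of the image says exactly that $\eta_D$ is a universal arrow from $D$ to $U$: for every $\Sigma$-algebra $X$ and continuous $f : D \to U X$ there is a unique $\Sigma$-algebra morphism $g : F D \to X$ with $U g \circ \eta_D = f$. Existence of $g$ is the unique $\sigOver{\Sigma}{D}$-morphism $\Initial{\sigOver{\Sigma}{D}} \to \addIncl{X}{f}$ with its $\inclName$-component forgotten, and uniqueness holds because any such $g$ automatically respects $\inclName$ and therefore lifts back to a $\sigOver{\Sigma}{D}$-morphism out of the initial algebra.

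Finally, the standard universal-arrow presentation of adjunctions finishes the argument: the action of $F$ on a continuous map $h : D \to D'$ is the unique $\Sigma$-algebra morphism $F h : F D \to F D'$ with $U(F h) \circ \eta_D = \eta_{D'} \circ h$, functoriality of $F$ and naturality of $\eta$ follow from the uniqueness clause, and the natural bijection $\Alg{\Sigma}(F D, X) \cong \DCPO(D, U X)$ is $g \mapsto U g \circ \eta_D$, exhibiting $F \dashv U$. The only step carrying real content beyond bookkeeping is the category isomorphism $\Alg{\sigOver{\Sigma}{D}} \cong (D \downarrow U)$; I expect the main obstacle to lie in checking that $\forgetIncl{(-)}$ acts as claimed on morphisms (so that it is fully faithful onto the comma category), together with keeping the universe levels consistent so that the initial object supplied by Theorem~\ref{thm:initial-alg} lives where the adjunction requires.
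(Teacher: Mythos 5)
Your proposal is correct and follows essentially the same route as the paper: the paper also constructs the adjunction via universal arrows, taking $\eta_D$ to be the inclusion operation $\app{\inclName}$, obtaining existence from the unique morphism $\Initial{\sigOver{\Sigma}{D}} \to \addIncl{X}{f}$ supplied by Theorem~\ref{thm:initial-alg}, and uniqueness by lifting any competing morphism back to a $\sigOver{\Sigma}{D}$-morphism out of the initial algebra. Your packaging of the correspondence as an isomorphism $\Alg{\sigOver{\Sigma}{D}} \cong (D \downarrow U)$ is a slightly more explicit presentation of the same idea, not a different argument.
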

\begin{proof}
    We construct this adjunction using universal arrows.
    The unit of the adjunction, $\eta_D : D \to UFD$, is given by the inclusion operation $\app{\inclName}$.
    The universal arrows are constructed in the following way.
    Let $X : \Alg{\Sigma}$ be an algebra and $f : D \to UX$.
    By the initiality of $\Initial{\sigOver{\Sigma}{D}}$, we have a unique map $\Initial{\sigOver{\Sigma}{D}} \to \addIncl{X}{f}$.
    By forgetting about the fact that this map commutes with the inclusion operation,
    we get a map $FD \to X$, whose uniqueness follows from the initiality of $\Initial{\sigOver{\Sigma}{D}}$.
\end{proof}

\subsection{Pointed DCPO}
We instantiate the free algebra construction of Section~\ref{sec:initial-algebra-free} to two specific examples.
The first example that we consider, is given by pointed DCPOs.
The type of pointed DCPOs can be given as algebras for a suitably chosen signature.
\begin{example}\label{ex:pointed-dcpo-sig}
  We define $\pointedDcpoSig$, the signature for pointed DCPOs.
  We add one operation with constructor name $\botName$ and monomial $(\bot \to X) \times 1$.
  This allows us to write down the term $\botTerm$, which represents the least element.
  Finally, we add the formal inequality $\formalIneq{\botTerm}{x}$ to our signature.
\end{example}

An algebra for $\pointedDcpoSig$ thus consists of a DCPO $D$ together with a least element.
A map between algebras $D,E$ is a continuous map $D \to E$ which preserves the bottom element.

\begin{proposition}\label{prop:pointed-dcpo-as-alg}
    Algebras for $\pointedDcpoSig$ correspond to pointed DCPOs.
\end{proposition}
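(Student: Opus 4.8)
The plan is to establish a bijection (more precisely, an equivalence of types) between algebras for $\pointedDcpoSig$ and pointed DCPOs, and then check that this bijection respects the notion of morphism on both sides, so that it assembles into an isomorphism of categories. First I would unfold what an algebra for $\pointedDcpoSig$ is, using the earlier definitions. By Definition~\ref{def:prealgebra}, a prealgebra consists of a DCPO $X$ together with a Scott continuous map $\interpretation{X}{\botName} : \interp{\sigMonomialFam{\pointedDcpoSig}(\botName)}(X) \to X$. Since the monomial for $\botName$ is $(\bot \to X) \times 1$, its interpretation has domain $(\bot \to X) \times 1$. The key observation is that $\bot \to X$ is contractible (there is a unique map out of the empty type) and $1$ is the unit DCPO, so this domain is (equivalent to) the terminal DCPO $1$; hence a continuous map $1 \to X$ is the same data as a single element $\botTerm : X$. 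Continuity here is automatic, so no extra condition arises from the prealgebra structure.

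Next I would account for the single formal inequality $\formalIneq{\botTerm}{x}$. By Definition~\ref{def:ineq-validity}, validity of this inequality in a prealgebra $X$ means that for every variable assignment $\varAssignment$, the interpretation of the left-hand side is below that of the right-hand side in $X$. Interpreting the terms via Definition~\ref{def:terms-interpretation}, the left-hand side $\botTerm$ interprets to the element $\interpretation{X}{\botName}((\lambda b.\ \dots), \tt)$ picked out above, and the right-hand side $x$ interprets to $\varAssignment(x)$, which ranges over all elements of $X$ as $\varAssignment$ varies. Thus validity says precisely that the chosen element is below every element of $X$, i.e. it is a least element. Combining this with the previous paragraph, an algebra for $\pointedDcpoSig$ is exactly a DCPO equipped with a least element, which is the definition of a pointed DCPO.

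To upgrade this from a correspondence of objects to a correspondence of categories, I would then compare the morphisms. By Definition~\ref{def:prealgebra-mor}, an algebra morphism $f : X \to Y$ is a Scott continuous map commuting with $\interpretation{\blank}{\botName}$; unfolding as above, this commutation says exactly $f(\bot_X) = \bot_Y$, i.e. $f$ preserves the least element. This matches the notion of morphism of pointed DCPOs described in the text. I would therefore conclude that the assignment is functorial and an isomorphism on hom-sets, giving the desired correspondence.

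The main obstacle I anticipate is not conceptual but bookkeeping: I must carefully justify that the domain $(\bot \to X) \times 1$ of the interpretation is genuinely equivalent to the terminal DCPO, and that this equivalence is compatible with the DCPO structure and Scott continuity, so that ``a continuous map out of it'' really reduces to ``a single point'' with no residual continuity obligation. There is also the usual subtlety, invoked implicitly elsewhere in the paper, that least elements are unique and the property of being a least element is a proposition, so the correspondence between algebras and pointed DCPOs is an equivalence of types rather than merely a surjection; I would point to the set-level truncation of $\Initial{\Sigma}$ and the propositional nature of $\Leq{\Sigma}$ to handle these coherence issues cleanly.
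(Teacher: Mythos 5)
Your proposal is correct and follows essentially the same route as the paper, which gives no explicit proof but whose surrounding text performs exactly this unfolding: the monomial $(\bot \to X) \times 1$ reduces to the terminal DCPO so the operation is a single element, the formal inequality $\formalIneq{\botTerm}{x}$ makes it a least element, and the morphism condition becomes preservation of the bottom element. Your extra care about the contractibility of $\bot \to X$ and the propositionality of being a least element is sound bookkeeping but does not change the argument.
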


By using Theorem \ref{thm:free-alg} for the signature $\pointedDcpoSig$,
we obtain an adjunction between the category of DCPOs and pointed DCPOs,
and this adjunction gives rise to a monad.
Note that this construction is similar to how Altenkirch, Danielsson, and Kraus constructed the partiality monad~\cite{Partiality},
except that we use DCPOs instead of $\omega$-CPOs.

\subsection{Power Algebras}
Our final example is the Plotkin powerdomain for which we described a suitable signature in Section \ref{sec:signatures}.

\begin{example}\label{ex:power-sig}
  We define $\powerSig$, the signature for the Plotkin powertheory.
  We add one operation with constructor name $\formalUnionName$ and monomial $(\Bool \to X) \times 1$.
  Given terms $t_1, t_2$, we write $t_1 \formalUnion t_2$ for the term representing the formal union of $t_1, t_2$.
  Finally, we add the following formal inequalities, to guarantee that formal union is commutative, associative and idempotent.
  \[
    \formalIneq{x \formalUnion y}{y \formalUnion x} \quad \quad
    \formalIneq{(x \formalUnion y) \formalUnion z}{x \formalUnion (y \formalUnion z)} \quad \quad
    \formalIneq{x \formalUnion x}{x} \quad \quad
    \formalIneq{x}{x \formalUnion x}
  \]
\end{example}

An algebra for $\powerSig$ consists of a DCPO $D$, and a continuous map $\formalUnion : D \to D \to D$
which is commutative, associative and idempotent.
A map between algebras $D,E$ is a continuous map $D \to E$ which commutes with the formal union operations.

By using Theorem \ref{thm:free-alg} for the signature $\powerSig$,
we get that every DCPO $D$ gives rise to a free algebra for $\powerSig$,
which is the Plotkin powerdomain.
This free algebra is constructed as the QIIT with the constructors given in Figure \ref{fig:power-domain-operations}, \ref{fig:power-domain-ineqs}, and \ref{fig:power-domain-dcpo}.

\section{Related Work}
\label{sec:related-work}
In this section, we briefly recall the related work and how it compares to the results in this paper.
Our development is inspired by the work of Altenkirch, Danielsson, and Kraus~\cite{Partiality} that discusses the partiality monad as a domain. We extend this by developing a general notion of signature for DCPOs to capture a large variety of domain constructions.
A difference is that Altenkirch, Danielsson, and Kraus use $\omega$-CPOs, whereas we use DCPOs.
These two notions are equivalent if we assume the axiom of choice (and limit to countable index sets),
but constructively the notion of DCPO is stronger.
This difference complicates the specification of the QIIT in Section \ref{sec:initial-algebra},
because we have to quantify over all directed sets (and not just chains).
Note that one can modify our work to obtain free $\omega$-CPO algebras.

There are various other constructions of the partiality monad.
We have already mentioned the work by Escard\'o and Knapp~\cite{escardo:2017} and by De Jong and Escard\'o~\cite{dejong:2021},
who construct the lift of types and of DCPOs without using any quotient type.
More specifically, the lift of a type $A$ is defined to be the type $\sum_{P : \Omega} P \rightarrow A$ of partial elements of $A$,
where $\Omega$ is the type of all propositions.
Their construction gives rise to a DCPO, which is in fact the free pointed DCPO.
Note that each of these constructions is isomorphic to each other, because they give rise to a left adjoint of the forgetful functor from the category of DCPOs to the category of sets.
Chapman, Uustalu, and Veltri construct the free $\omega$-CPO using a quotient inductive type, namely the free countably-complete semilattice~\cite{DBLP:conf/ictac/ChapmanUV15}.
Their definition is similar to the one used in the work by Escard\'o and Knapp and by De Jong and Escard\'o,
except that $\Omega$ is replaced by the initial $\sigma$-frame.
They use quotient inductive types to construct the initial $\sigma$-frame.
Note that one can also construct the initial $\sigma$-frame using only function extensionality, propositional extensionality, and propositional truncations
as shown by Escard\'o\footnote{\url{https://github.com/martinescardo/TypeTopology/blob/master/source/NotionsOfDecidability/QuasiDecidable.lagda}}.

Finally, there are various other constructions of free DCPO algebras in classical foundations.
One way to construct these algebras is by using the free DCPO for a presentation~\cite{JUNG2008209}.
This construction is impredicative because it uses the power set,
and if one were to use this in predicative foundations,
then it raises the universe level.
Bidlingmaier, Faissole, and Spitters discuss three ways to justify the existence of free $\omega$-CPOs for presentations constructively~\cite[Section 3]{bidlingmaier:2021},
and they assume the existence of free $\omega$-CPOs for presentations.
One can justify their axiom by either assuming impredicativity, countable choice, or QIITs,
and in this paper, we use the last way to justify the existence of free complete partial orders.
Abramsky and Jung~\cite[Chapter 6]{abramskyJung} discuss equational theories of domains.
Their notion of signature only includes operations with an arity,
and they do not have constants.
For this reason,
one cannot construct the coequalizers, the coalesced sum, and the smash product as an initial algebra of their signatures.

\section{Conclusions and Future Work}
\label{sec:conclusion}
We have presented a general framework to describe and construct
algebraic effects in domain theory.  We describe the algebraic effects
using a signature: a set of operations accompanied by an inequational
theory these operations should obey.  An algebra for such a signature
is a DCPO with Scott continuous maps for each operation, such that the
inequational theory is satisfied.  We construct the initial algebra as
a quotient inductive-inductive type (QIIT), extending~\cite{Partiality} (where the
partiality monad is constructed as a QIIT). 
Different from~\cite{JUNG2008209}, this construction does not use power sets, so it is predicative.
The initial algebra is used for interpreting algebraic effects in denotational
semantics and we show that our framework captures a variety of well-known
examples, like coalesced sums, smash products, coequalizers,
partiality and power domains.
Our work has all been formalized in Cubical Agda.

Potential future work is to also consider handlers for algebraic effects.
We focus on the effects themselves, but their handlers are just as important,
as a handler describes how a result of a particular effect should be handled~\cite{lmcs:705}.
To give the full denotational semantics of a language with algebraic effects,
we also need to interpret the handlers in domain theory.

\end{document}